\documentclass[11pt]{article}
\usepackage{amssymb, amsmath, amsfonts, eurosym, geometry, graphicx, color, setspace, sectsty, footmisc, caption, natbib, pdflscape, array, hyperref, float, verbatim, listings, xcolor, commath, enumitem, xfrac, pgfplots, mathtools, tikz, multirow, diagbox, chngpage, booktabs, tabularx, dirtytalk, indentfirst, bbm, subcaption, babel, dsfont, aeguill, amsthm, multicol, fancyhdr, tocloft, fontawesome5, wasysym, bookmark}
\usepackage[utf8]{inputenc}
\usepackage[affil-it]{authblk}

\title{Directional Dependence of Extreme Events\thanks{
The complete replication package for this study is publicly available at: \url{https://github.com/maximenc/Directional-Dependence-Extreme-Events}. MG acknowledges the support of the Chaire ``Deep Finance Statistics'' between QRT, Ecole Polytechnique and its foundation. We thank Duc Thi Luu, Jiang Pu, Jean-Gabriel Attali and participants of the Quantitative Finance Workshop at De Vinci Research Center in Paris for helpful suggestions and comments. We also thank Cambyse Pakzad and Carlo Gaetan for suggestions and participants of the 56th Statistical Days of the French Statistical Society. We further thank Tsung-I Lin and the participants of the 8th International Conference on Econometrics and Statistics. \\
\vspace{-0.25cm}
E-mail addresses: \texttt{matthieu.garcin@m4x.org} (M. Garcin), and \texttt{m.nicolas@ucl.ac.uk} (M. L. D. Nicolas)}
}

\date{\today}

\newcommand{\proba}{\mathbb{P}} 
\newcommand{\indic}{\mathds{1}} 
\newcommand{\E}{\mathbb{E}} 

\newtheorem{definition}{Definition}[section]

\newtheorem{theorem}{Theorem}[section]

\newtheorem{proposition}{Proposition}[section]

\newcommand{\indep}{\raisebox{0.05em}{\rotatebox[origin=c]{90}{$\models$}}}

\begin{document}

\author[1]{Matthieu Garcin}
\author[2]{Maxime L. D. Nicolas}
\affil[1]{\small{\textit{De Vinci Higher Education, De Vinci Research Center, Paris, France}}}
\affil[2]{\small{\textit{Institute of Finance \& Technology, University College London, London, UK}}}

\maketitle

\vspace{-0.5cm}

\begin{abstract}
\linespread{1.1}\selectfont
\noindent

This paper introduces a novel measure to quantify the directional dependence of extreme events between two variables. The proposed approach is designed to capture asymmetric tail dependence by studying conditional tail expectations of rank-transformed variables, thereby quantifying the behavior of one variable when the other takes extreme values. We investigate the theoretical asymptotic behavior of the associated estimator. The effectiveness of the approach is demonstrated through an extensive simulation study. In addition, we discuss the use of the proposed coefficient for the detection of causal effects in extreme events.
Finally, we apply the method to an oceanographic dataset, where the results highlight the strong asymmetric nature of extreme events and identify the dominant directions of extremal influence among key oceanographic variables.
As a directional measure of tail dependence, our approach provides a natural tool for exploring causal-effect relationships in extreme-value settings.

\noindent \textbf{Keywords}: Tail Dependence, Asymmetric Dependence, Directional Dependence, Copula, Tail Expectation

\end{abstract}


\newpage

\section{Introduction}

Extreme events often propagate asymmetrically across systems. In meteorology, \cite{jasson2005asymetrie} documents that storms in Europe often propagate from west to east, so extreme events affecting Germany are likely to have already impacted France, whereas the reverse propagation is less common. In hydrology, extreme rainfall upstream often induces downstream flooding, whereas extreme downstream water levels may occur without corresponding upstream extremes \citep{deidda2023asymmetric}. In finance, severe market downturns typically induce extreme losses in individual assets, whereas extreme movements in a single asset may occur without triggering a market-wide crash. Similar asymmetries arise in seismology, energy systems, and engineering networks, where rare shocks may be transmitted strongly in one direction but weakly---or not at all---in the opposite direction.

Traditional measures of tail dependence focus on the joint occurrence of extreme events and do not capture directional effects. In particular, the tail dependence coefficient (TDC) and its variants are symmetric by construction \citep{sibuya1960bivariate, joe1997multivariate} and therefore provide no information on the behavior of isolated extremes, that is, extreme events that occur in one variable without a corresponding extreme in the other. Consider a situation in which extreme events of a variable $X$ always occur jointly with extremes of another variable $Y$, while extreme events of $Y$ occur independently of extremes of $X$. Consider, conversely, the opposite configuration. These two fundamentally different dependence structures yield identical TDCs, as this measure only captures the probability of joint extremes. A meaningful measure of directional tail dependence should therefore capture how extreme or how non-extreme a variable is on average when the other is extreme, thereby accounting for the presence of isolated extreme events.

To address this limitation of TDCs, several authors have proposed measures based on conditional tail behavior through conditional tail expectations (CTEs), such as $\mathbb{E}[X \mid Y>t]$ or $\mathbb{E}[X \mid Y=t]$, as proposed by \cite{hua2014strength, hua2019assessing}. These quantities aim to capture how extreme realizations of $Y$ affect the behavior of $X$. However, CTEs depend on both the marginal distributions and the dependence structure, which makes meaningful comparisons across variables difficult. A related idea underlies the \textit{tail exchangeability} index introduced by \cite{hua2019assessing}, but this approach requires identically distributed non-negative variables, substantially limiting its applicability.

Other recent approaches define directional tail dependence through alternative frameworks. \cite{gnecco2021causal} propose a causal tail coefficient within heavy-tailed linear structural causal models, focusing explicitly on causal inference but relying on strong modeling assumptions regarding the tail of the distribution and on the linear dependence between variables. \cite{jiang2024efficient} introduce a notion of directional tail dependence based on the geometry of tail regions and extend this framework to extremal causality analysis \citep{li2024multivariate}. While these contributions provide valuable insights, they either impose specific structural assumptions, lack marginal invariance, or do not yield simple empirical estimators that can be readily implemented in general settings. More generally, necessary properties for directed dependence measures have been discussed by \citep{junker2021estimating}.



In this paper, we propose a new measure of directional tail dependence that directly targets asymmetric extremal behavior while remaining marginal-free. Our approach is based on transforming each variable using its probability integral transform and studying the expected transformed value of one variable conditional on the other taking extreme values. This rank-based transformation places all variables on a common scale and allows extremal behavior to be compared across directions independently of the marginal distributions. The resulting quantity admits a natural expression in terms of the copula, thereby isolating the dependence structure.

Beyond its definition, we develop an inferential framework for the proposed measure. We construct an empirical estimator based on the empirical copula and establish its strong consistency and asymptotic normality under mild smoothness conditions \citep{segers2012asymptotics,remillard2009testing}. These results allow formal statistical testing of directional extremal effects. We illustrate the finite-sample performance of the estimator through simulation studies using asymmetric copula constructions, including Khoudraji's device \citep{khoudraji1997contributions,genest2012tests}, and apply the methodology to oceanographic and hydrological data, where directional asymmetries in extreme events are well documented \citep{horning2025simulation}.

The proposed measure quantifies directional extremal dependence rather than causality per se. While asymmetric tail behavior may reflect underlying propagation mechanisms in specific applications, the coefficient is defined purely in terms of the joint distribution and does not rely on structural or temporal assumptions. It therefore provides a robust descriptive tool for identifying and quantifying directional features of extreme events.

The rest of the paper is organized as follows. Section~\ref{sec:dir-tail-dep} introduces the concept of directional tail dependence and defines the proposed coefficient based on transformed conditional tail expectations. Section~\ref{sec:asymp-prop} presents the asymptotic properties of the estimator, including consistency and limit distribution. Section~\ref{sec:sim-study} presents a simulation study, with two main setups: Khoudraji's device and the skew-t copula. Section~\ref{sec:data-application} applies the method to oceanographic data, and Section~6 concludes.

\section{Directional Tail Dependence } \label{sec:dir-tail-dep}

To quantify asymmetries in the behavior of two variables under extreme conditions, we propose a measure that compares the behavior of $X$ when $Y$ is extreme with the behavior of $Y$ when $X$ is extreme. Our approach builds on the marginal expected shortfall (MES), also referred to as the conditional tail expectation (CTE).\footnote{Although closely related, the CTE is often defined in a univariate setting. It is equivalent to the Tail Value at Risk (CVaR) and corresponds to the expected value of $X$ conditional on $X$ exceeding its $(1-p)$-th quantile, that is, $\mathbb{E}[X \mid X > Q_X(1-p)]$, where $Q_X$ denotes the quantile function of $X$.} In a bivariate context, this concept naturally extends to quantities such as $\mathbb{E}[X \mid Y \le t]$ or $\mathbb{E}[X \mid Y = t]$ for small values of $t$ \citep{cai2005conditional}. These expressions describe the expected behavior of $X$ when $Y$ attains extreme (low) values.

However, the limitation of this approach is that it depends on the marginal distributions of $X$ and $Y$, making it difficult to compare $\mathbb{E}[X \mid Y \leq t]$ with $\mathbb{E}[Y \mid X \leq t]$ when $X$ and $Y$ have different scales or distributional shapes. To address this, a transformation is required that places both variables on a common scale. In particular, meaningful comparisons between bivariate random vectors with arbitrary marginals are only possible once the marginals are standardized into the same Fréchet (or uniform) class \citep{dette2013copula}.

Therefore, to allow a comparison between the tail behaviors of two random variables with different marginal distributions, we apply the probability integral transform to standardize each variable. Specifically, we transform $X$ and $Y$ using their respective marginal cumulative distribution functions $F_X$ and $F_Y$, which are assumed to be continuous. Thanks to this probability integral transform, the new variables $F_X(X)$ and $F_Y(Y)$ are uniformly distributed on the unit interval. This transformation corresponds to scaled ranks in empirical settings and thus allows to compare the extremeness of multiple variables on a common scale. We then introduce the following quantity to assess the behavior of $X$ conditionally to the lower tail of $Y$.

\begin{definition}[Transformed Conditional Tail Expectation]
The Transformed Conditional Tail Expectation (TCTE) of $X$ given $Y$ is defined, for $v \in (0,1)$, as
$$\chi^{Y \rightarrow X}(v)
= \mathbb{E}\!\left[ F_X(X) \,\middle|\, F_Y(Y) \le v \right].$$
The corresponding limiting quantity is defined as
$$\chi^{Y \rightarrow X}
= \lim_{v \to 0^+} \chi^{Y \rightarrow X}(v),$$
provided the limit exists.
\end{definition}

The TCTE can be interpreted as the expected scaled rank of one variable conditional on the other attaining extreme values.%
\footnote{A related quantity, termed the \emph{causal tail coefficient}, is introduced by \cite{gnecco2021causal} within a linear structural causal model framework \citep{pearl2009causality}. While their objective is causal identification, our approach focuses instead on characterizing asymmetric dependence in the tails using the copula, without imposing a specific causal structure.}
By construction, the transformation by the marginal distribution functions yields a scale-invariant measure that relies only on the dependence structure. In particular, the TCTE admits a natural representation in terms of the copula associated with the joint distribution of $(X,Y)$.

\begin{proposition}\label{proposition:dtdc-copula}
Let $C$ denote the copula of $(X,Y)$, their marginal cumulative distribution function being continuous. Then, for $v \in (0,1)$,
\begin{equation}\label{eq:tcte_copula}
\chi^{Y \rightarrow X}(v)
= 1 - \frac{1}{v} \int_0^1 C(u,v)\, \mathrm{d}u.
\end{equation}
Moreover, if the copula $C$ is differentiable in its second argument at $v=0$, then
\begin{equation}\label{eq:tcte_limit_copula}
\chi^{Y \rightarrow X}
= 1 - \int_0^1 \partial_2 C(u,0)\, \mathrm{d}u,
\end{equation}
where $\partial_2 C(u,0)$ denotes the partial derivative of $C$ with respect to its second argument evaluated at $v=0$. 
\end{proposition}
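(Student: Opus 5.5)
Set $U=F_X(X)$ and $V=F_Y(Y)$. By continuity of the marginals, both are uniform on $(0,1)$, and by Sklar's theorem $(U,V)$ has joint distribution function $C$. Since $\mathbb{P}(V\le v)=v$, the definition of the TCTE gives
$$\chi^{Y\rightarrow X}(v)=\frac{1}{v}\,\mathbb{E}\big[U\,\indic_{\{V\le v\}}\big].$$
The plan is to rewrite this expectation through the tail-integral representation of a nonnegative variable bounded by one. Because $U\in[0,1]$, we have $U=\int_0^1 \indic_{\{U>u\}}\,\mathrm{d}u$. Applying Fubini's theorem (everything is nonnegative and bounded) gives
$$\mathbb{E}\big[U\,\indic_{\{V\le v\}}\big]=\int_0^1 \mathbb{P}(U>u,\,V\le v)\,\mathrm{d}u=\int_0^1\big(v-C(u,v)\big)\,\mathrm{d}u ,$$
where we used $\mathbb{P}(U>u,V\le v)=\mathbb{P}(V\le v)-\mathbb{P}(U\le u,V\le v)=v-C(u,v)$. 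Dividing by $v$ yields \eqref{eq:tcte_copula}.

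For \eqref{eq:tcte_limit_copula}, the boundary condition $C(u,0)=0$ gives
$$\frac{1}{v}\int_0^1 C(u,v)\,\mathrm{d}u=\int_0^1 \frac{C(u,v)-C(u,0)}{v}\,\mathrm{d}u .$$
Differentiability in the second argument at $v=0$ means that, for each $u$, the integrand converges to $\partial_2 C(u,0)$ as $v\to0^+$, understood as a one-sided derivative.

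The only real step is exchanging the limit and the integral. We will justify it by dominated convergence, using the Lipschitz property of copulas: $|C(u,v)-C(u,v')|\le |v-v'|$. This bounds the difference quotients uniformly by $1$, which is integrable on $[0,1]$.

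Hence $\int_0^1 C(u,v)/v\,\mathrm{d}u\to\int_0^1\partial_2C(u,0)\,\mathrm{d}u$, and the limit $\chi^{Y\rightarrow X}$ exists and equals \eqref{eq:tcte_limit_copula}. The derivative need only exist for almost every $u$ for this argument to go through. We therefore interpret the hypothesis as pointwise (or a.e.) existence of $\partial_2 C(u,0)$, which is the main subtlety to state carefully.
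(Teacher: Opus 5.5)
Your proof is correct and follows the paper's argument for \eqref{eq:tcte_copula} essentially verbatim: the representation $U=\int_0^1\indic_{\{U>u\}}\,\mathrm{d}u$, Fubini's theorem, and $\proba(U>u,V\le v)=v-C(u,v)$. For \eqref{eq:tcte_limit_copula} the paper only says the formula is obtained as the limit $v\to0^+$, so your dominated-convergence step fills in what the paper leaves implicit: the difference quotients are bounded by $1$ because $0\le C(u,v)/v\le 1$, and this also shows that pointwise (even almost-everywhere) right-differentiability at $0$ is enough.
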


The proof of Proposition~\ref{proposition:dtdc-copula} is postponed in Appendix~\ref{sec:proof_copula_form}.

The value of $\chi^{Y \rightarrow X}$ provides a natural interpretation in terms of directional tail influence. If $\chi^{Y \rightarrow X} < \tfrac{1}{2}$, we have that $Y$ exerts a \emph{positive tail influence} on $X$, meaning that extremely low realizations of $Y$ tend to be associated with relatively low realizations of $X$. Conversely, if $\chi^{Y \rightarrow X} > \tfrac{1}{2}$, this indicates an \emph{anti-dependence} or \emph{negative tail influence} of $Y$ on $X$, whereby extremely low values of $Y$ are typically associated with relatively large values of $X$.

In principle, the function $\chi^{Y \rightarrow X}(v)$ could oscillate around $\tfrac{1}{2}$ for different values of $v$.  To ensure a consistent interpretation of tail influence, we focus on dependence structures that exhibit a global form of positive or negative association in the tails. Specifically, we assume quadrant dependence \citep{denuit2004nonparametric, scaillet2005kolmogorov}, which we recall below.

\begin{definition}[Positive Quadrant Dependence]
A pair of random variables $(X,Y)$ is said to be \emph{positively quadrant dependent} (PQD) if
$$
P(X \le x, Y \le y) \ge P(X \le x) P(Y \le y), \quad \forall x,y,
$$
or, equivalently, if the associated copula $C$ satisfies
\begin{equation}\label{eq:pqd}
C(u,v) \ge uv, \quad \forall (u,v) \in [0,1]^2.
\end{equation}
The pair $(X,Y)$ is said to exhibit \emph{negative quadrant dependence} (NQD) when the inequality is reversed.
\end{definition}

Although quadrant dependence restricts the class of admissible dependence structures, it is a relatively weak assumption that holds for a wide range of dependence structures in practice. From equation~\eqref{eq:tcte_copula}, PQD implies $\chi^{Y\rightarrow X}(v) \leq \frac{1}{2}$ and $\chi^{X\rightarrow Y}(v) \leq \frac{1}{2}$, while NQD implies $\chi^{Y\rightarrow X}(v) \geq \frac{1}{2}$ and $\chi^{X\rightarrow Y}(v) \geq \frac{1}{2}$. The boundary case $\chi^{Y \rightarrow X}(v)=\tfrac{1}{2}$ corresponds to the independence copula $C(u,v)=uv$. However, while independence implies $\chi^{Y \rightarrow X}(v)=\tfrac{1}{2}$, the converse does not necessarily hold.

The potential asymmetry between $\chi^{Y \rightarrow X}$ and $\chi^{X \rightarrow Y}$ motivates the introduction of a directional tail dependence measure.

\begin{definition}[Directional Tail Dependence]\label{def:DTD}
Let $(X,Y)$ be a bivariate random vector with marginal distribution functions $F_X$ and $F_Y$, and let
$U_X = F_X(X)$ and $U_Y = F_Y(Y)$ denote the corresponding probability integral transforms. 
Assume that $(X,Y)$ exhibits PQD or NQD.
Then, the pair $(X,Y)$ is said to exhibit directional tail dependence (DTD) if
\begin{equation}\label{eq:dtd1}
\chi(X,Y) := \chi^{Y \rightarrow X} - \chi^{X \rightarrow Y} \neq 0.
\end{equation}
If $\chi(X,Y)=0$, the pair $(X,Y)$ is said to exhibit non-directed tail dependence. When $\chi(X,Y)\neq 0$ the direction of the tail influence is determined by the sign of $\chi(X,Y)$. We say that:
\begin{itemize}[noitemsep]
    \item under PQD (respectively NQD), $X$ is tail-directed toward $Y$ if $\chi(X,Y) > 0$ (resp. $< 0$).
    \item under PQD (respectively NQD), $Y$ is tail-directed toward $X$ if $\chi(X,Y) < 0$ (resp. $> 0$).
\end{itemize}
\end{definition}

The magnitude of $\chi(X,Y)$ quantifies the strength of asymmetric tail behavior between $X$ and $Y$, while its sign along with the sign of $\chi^{Y \rightarrow X}-1/2$ and $\chi^{X \rightarrow Y}-1/2$ identifies the direction of this asymmetry. Large absolute values of $\chi(X,Y)$ indicate pronounced directional effects in the tails, whereas values close to zero suggest near-symmetric extremal behavior. Under PQD, if $\chi(X,Y) > 0$, extremely low realizations of $X$ tend to be associated with relatively small values of $Y$ more frequently than the reverse situation. This indicates that extremes of $X$ exert a stronger influence on the lower tail behavior of $Y$ than extremes of $Y$ do on $X$. Conversely, under NQD, if $\chi(X,Y) > 0$, extremely low realizations of $Y$ are more likely to coincide with relatively large values of $X$, indicating a stronger extremal influence of $Y$ on the upper tail behavior of $X$. We note that non-reciprocal tail dependences may also appear, in which, for example, $\chi^{Y \rightarrow X}>1/2$ and $\chi^{X \rightarrow Y}<1/2$. In this situation, we cannot interpret $\chi(X,Y)$ in the same way as in Definition~\ref{def:DTD}.

By construction, the directional tail dependence coefficient $\chi(X,Y)$ satisfies the following properties:
\begin{enumerate}[noitemsep]
    \item $\chi(X,Y) \in [-1,1]$, restricted to $\chi(X,Y) \in [-1/2,1/2]$ in case of PQD or NQD (\emph{boundedness});
    \item $\chi(X,Y) = -\chi(Y,X)$ (\emph{antisymmetry});
    \item $X \indep Y \;\Rightarrow\; \chi(X,Y)=0$ (\emph{null under independence});
    \item for the TDC $\lambda_{XY}$, we have $\lambda_{XY}=1 \;\Rightarrow\; \chi(X,Y)=0$ (\emph{null under complete tail dependence});
    \item $(X,Y) \stackrel{d}{=}(Y,X) \;\Rightarrow\; \chi(X,Y)=0$ (\emph{null under exchangeability});
    \item $\chi\!\left(g(X),h(Y)\right)=\chi(X,Y)$ for strictly increasing functions
    $g,h:\mathbb{R}\to\mathbb{R}$ (\emph{scale invariance}).
\end{enumerate}

These properties raise a few remarks.

Under exchangeability, the joint distribution of $(X,Y)$ is symmetric with respect to permutation, implying a null directional tail dependence coefficient. This symmetry property is inherent to many commonly used multivariate models, including most copula families, and directional tail dependence therefore quantifies deviations from this standard modeling assumption.

A value $\chi(X,Y)=0$ does not characterize the strength of extremal dependence between $X$ and $Y$. In particular, a null directional tail dependence may arise both under independence and under perfect extremal dependence ($\lambda_{XY}=1$). In such cases, the tail behavior is either symmetric or perfectly coupled, so that no directional asymmetry can be detected by $\chi(X,Y)$.

Although $\chi(X,Y)$ is entirely determined by the copula, it differs fundamentally from the \emph{tail exchangeability} index of \cite{hua2019assessing}, which is defined for survival functions of non-negative random variables and does not provide a directional comparison between variables.

The definition of $\chi(X,Y)$ focuses on the lower tails of the marginal distributions. An analogous construction can be obtained for upper-tail dependence by replacing the conditioning events with $F_X(X) \ge 1 - v$ and $F_Y(Y) \ge 1 - v$.

The directional tail dependence coefficient $\chi(X,Y)$ captures asymmetries in the joint tail behavior of $(X,Y)$ and is defined purely in terms of their joint distribution. As such, it does not, by itself, establish a causal relationship. However, directional asymmetries in the tails may be informative about potential causal mechanisms or propagation effects in specific applications. The coefficient $\chi(X,Y)$ can therefore be viewed as a descriptive tool that highlights directional extremal patterns and helps identify situations in which a more detailed causal analysis may be needed.

\section{Nonparametric Estimation of TCTE} \label{sec:asymp-prop}

\subsection{Estimator and Consistency}

Let $(X_1,Y_1),\ldots,(X_n,Y_n)$ be an i.i.d.\ sample from the joint distribution of $(X,Y)$. 
For any $v \in (0,1)$, an empirical estimator of $\chi^{Y \rightarrow X}(v)$ can be defined through the empirical copula as
\begin{equation}\label{eq:estimChiInt}
\widehat{\chi}^{Y \rightarrow X}_n(v)
= 1 - \frac{1}{v}\int_0^1 \widehat{C}_n(u,v)\,\mathrm{d}u,
\end{equation}
where the empirical copula $\widehat{C}_n$ is given by
$$
\widehat{C}_n(u,v)
= \frac{1}{n}\sum_{i=1}^n 
\indic_{\{\widehat{F}_{X,n}(X_i)\le u\}}
\indic_{\{\widehat{F}_{Y,n}(Y_i)\le v\}},
$$
and $\widehat{F}_{X,n}$ and $\widehat{F}_{Y,n}$ denote the empirical distribution functions of $X$ and $Y$, respectively.
By discretizing the integral in \eqref{eq:estimChiInt}, the estimator $\widehat{\chi}^{Y \rightarrow X}_n(v)$ can equivalently be expressed as a finite sum, which is more convenient for numerical implementation. Indeed,
$$
\begin{aligned}
\widehat{\chi}^{Y \rightarrow X}_n(v)
&= 1 - \frac{1}{v}\sum_{i=1}^n \int_{(i-1)/n}^{i/n} \widehat{C}_n(u,v)\,\mathrm{d}u \\
&= 1 - \frac{1}{vn}\sum_{i=1}^n \widehat{C}_n\!\left(\frac{i-1}{n},v\right) \\
&= 1 - \frac{1}{vn}\sum_{i=1}^{n-1} \widehat{C}_n\!\left(\frac{i}{n},v\right),
\end{aligned}
$$
where the last equality follows from the fact that $\widehat{C}_n(0,v)=0$. Recalling the definition of $\chi^{Y \rightarrow X}$ as the limit of $\chi^{Y \rightarrow X}(v)$ as $v \to 0^+$, a natural estimator of $\chi^{Y \rightarrow X}$ is obtained by setting
$\widehat{\chi}^{Y \rightarrow X}_n
:= \widehat{\chi}^{Y \rightarrow X}_n(v_n),$
where $(v_n)_{n\ge1}$ is a sequence of positive real numbers converging to zero at an appropriate rate, for instance $v_n=\sqrt{\log(n)/n}$. The asymptotic properties of this estimator are established in the following theorem.

\begin{theorem}\label{thm:Consistent}
For any fixed $v \in (0,1]$, the estimator $\widehat{\chi}^{Y \rightarrow X}_n(v)$ of $\chi^{Y \rightarrow X}(v)$ is strongly consistent. 
Moreover, if the copula $C$ is differentiable with respect to its second argument at $v=0$, uniformly in $u \in [0,1]$, and if $(v_n)_{n\ge1}$ satisfies
$ v_n \to 0
\quad \text{and} \quad
v_n \sqrt{n/\log\log n} \to \infty
\quad \text{as } n \to \infty, $
then $\widehat{\chi}^{Y \rightarrow X}_n(v_n)$ is a strongly consistent estimator of $\chi^{Y \rightarrow X}$.
\end{theorem}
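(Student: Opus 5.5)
The plan is to derive both claims from one uniform bound on the empirical copula process. Using Proposition~\ref{proposition:dtdc-copula} and \eqref{eq:estimChiInt}, we have for every $v\in(0,1]$
\begin{equation*}
\bigl|\widehat{\chi}^{Y \rightarrow X}_n(v)-\chi^{Y \rightarrow X}(v)\bigr|
\le \frac{1}{v}\int_0^1 \bigl|\widehat{C}_n(u,v)-C(u,v)\bigr|\,\mathrm{d}u
\le \frac{1}{v}\sup_{(u,w)\in[0,1]^2}\bigl|\widehat{C}_n(u,w)-C(u,w)\bigr| .
\end{equation*}
For fixed $v$, strong consistency therefore only needs $\sup|\widehat{C}_n-C|\to 0$ almost surely. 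This follows from the Glivenko--Cantelli theorem. First, $\widehat{C}_n$ differs from $H_n(\widehat F_{X,n}^{-1}(u),\widehat F_{Y,n}^{-1}(v))$ by at most $2/n$ (ties occur with probability zero since the margins are continuous). Second, $H_n$ converges uniformly to $H$ almost surely. Third, $\widehat F_{X,n}$ and $\widehat F_{Y,n}$ converge uniformly to $F_X$ and $F_Y$ almost surely. Together with the Lipschitz property of copulas, these three facts give $\sup|\widehat{C}_n-C|\to0$ a.s.

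For the second claim, we split the error as
\begin{equation*}
\bigl|\widehat{\chi}^{Y \rightarrow X}_n(v_n)-\chi^{Y \rightarrow X}\bigr|
\le \frac{1}{v_n}\sup_{[0,1]^2}\bigl|\widehat{C}_n-C\bigr|
+\bigl|\chi^{Y \rightarrow X}(v_n)-\chi^{Y \rightarrow X}\bigr|.
\end{equation*}
The second (bias) term is deterministic. Since $C(u,0)=0$, we have
\begin{equation*}
\chi^{Y \rightarrow X}(v_n)-\chi^{Y \rightarrow X}=-\int_0^1\Bigl(\frac{C(u,v_n)-C(u,0)}{v_n}-\partial_2C(u,0)\Bigr)\mathrm{d}u .
\end{equation*}
Differentiability at $v=0$ uniformly in $u$ means the integrand tends to zero uniformly, so this term vanishes. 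The right derivative at $0$ suffices here.

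The main obstacle is the stochastic term, because the Glivenko--Cantelli bound has no rate. We need the sharper statement
\begin{equation*}
\sup_{[0,1]^2}\bigl|\widehat{C}_n-C\bigr|=O\bigl(\sqrt{\log\log n/n}\bigr)\quad\text{a.s.}
\end{equation*}
The plan is to obtain it from the law of the iterated logarithm for empirical processes. Write $U_i=F_X(X_i)$ and $V_i=F_Y(Y_i)$. Then $\widehat C_n(u,v)=G_n(G_{1n}^{-1}(u),G_{2n}^{-1}(v))+O(1/n)$, where $G_n$ is the bivariate empirical cdf of the $(U_i,V_i)$ and $G_{1n}$, $G_{2n}$ are its margins. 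The Kiefer/Chung LIL for the bivariate empirical process (a VC class) gives $\sup|G_n-C|=O(\sqrt{\log\log n/n})$ a.s. The same bound holds for $\sup|G_{jn}^{-1}-\mathrm{id}|$, since the uniform quantile process has the same sup norm as the uniform empirical process up to $1/n$. Combining these through
\begin{equation*}
|\widehat C_n-C|\le |G_n-C|\circ(G_{1n}^{-1},G_{2n}^{-1})+|G_{1n}^{-1}(u)-u|+|G_{2n}^{-1}(v)-v|+O(1/n),
\end{equation*}
using the Lipschitz property of $C$, yields the rate. The stochastic term is then $O\bigl(\sqrt{\log\log n/n}/v_n\bigr)$, which tends to $0$ by the assumption $v_n\sqrt{n/\log\log n}\to\infty$. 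This completes the second claim; note that no smoothness of $C$ beyond Lipschitz continuity is needed for this step.
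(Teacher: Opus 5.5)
Your proposal is correct and follows the paper's proof. You use the same split into a stochastic term bounded by $v_n^{-1}\sup_{[0,1]^2}|\widehat{C}_n - C|$ and a deterministic bias term that vanishes by the uniform differentiability of $C$ at $v=0$. The only differences are that you derive the almost-sure rate $O(\sqrt{\log\log n/n})$ yourself from Kiefer's law of the iterated logarithm plus the quantile-process comparison, whereas the paper cites Deheuvels' Theorem~3.1 for it, and that you treat fixed $v$ via the uniform bound instead of dominated convergence.
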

The proof of Theorem~\ref{thm:Consistent} is given in Appendix~\ref{sec:proof_Consistent}.

\subsection{Limit Distribution and Threshold Selection}

We now establish the asymptotic normality of the empirical estimator
$\widehat{\chi}^{Y \rightarrow X}_n(v)$ for fixed $v \in (0,1]$.
This result provides the basis for statistical inference and for assessing the statistical significance of directional tail dependence.

\begin{theorem}\label{thm:GaussChiV}
Let the bivariate copula $C$ have partial derivatives, with $\partial_1C$ continuous on $(0,1)\times[0,1]$ and $\partial_2C$ continuous on $[0,1]\times(0,1)$. Then, for any fixed
$v \in (0,1]$,
$$\sqrt{n}\left(\widehat\chi^{Y\rightarrow X}_n(v)-\chi^{Y\rightarrow X}(v)\right) \overset{d}{\longrightarrow} \mathcal N(0,\sigma^2_C(v)),$$
where 
$$\begin{array}{ccl}
\sigma^2_C(v) & = & \frac{2}{v^2} \int_0^1\int_0^w\left\{ C(u,v)(1-C(w,v))+\partial_2C(u,v)\partial_2C(w,v)v(1-v) \right. \\
 & & +\partial_1C(u,v)\partial_1C(w,v)u(1-w)  \\
 & & -\partial_2C(u,v)C(w,v)(1-v) -\partial_2C(w,v)C(u,v)(1-v) \\
 & & -\partial_1C(u,v)[C(u,v)-C(w,v)u] -\partial_1C(w,v)C(u,v)(1-w) \\
 & & \left. +\partial_2C(u,v)\partial_1C(w,v)(C(w,v)-vw) +\partial_2C(w,v)\partial_1C(u,v)(C(u,v)-vu) \right\} dudw.
\end{array}$$
\end{theorem}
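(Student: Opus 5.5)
The plan is to apply the weak convergence of the empirical copula process together with the continuous mapping theorem, and then compute the variance of the resulting Gaussian functional. The smoothness assumptions of the statement are exactly Segers' conditions \citep{segers2012asymptotics}. Under them, the process $\mathbb{C}_n=\sqrt{n}(\widehat C_n-C)$ converges weakly in $\ell^\infty([0,1]^2)$ to a centered Gaussian process
$$\mathbb{C}(u,v)=\mathbb{B}(u,v)-\partial_1C(u,v)\,\mathbb{B}(u,1)-\partial_2C(u,v)\,\mathbb{B}(1,v).$$
Here $\mathbb{B}$ is a $C$-Brownian bridge with covariance
$$\E[\mathbb{B}(u,v)\mathbb{B}(u',v')]=C(u\wedge u',v\wedge v')-C(u,v)C(u',v').$$
By \eqref{eq:estimChiInt} and \eqref{eq:tcte_copula},
$$\sqrt{n}\bigl(\widehat\chi^{Y\rightarrow X}_n(v)-\chi^{Y\rightarrow X}(v)\bigr)=-\frac{1}{v}\int_0^1\mathbb{C}_n(u,v)\,\mathrm{d}u .$$
For fixed $v$, the map $f\mapsto \int_0^1 f(u,v)\,\mathrm{d}u$ is linear and bounded (hence continuous) on $\ell^\infty([0,1]^2)$. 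The continuous mapping theorem then gives convergence in distribution to $-\frac1v\int_0^1\mathbb{C}(u,v)\,\mathrm{d}u$. This limit is a linear functional of a Gaussian process and is therefore centered normal; measurability of the integral follows from the continuity of the sample paths of the limit.

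It remains to identify the variance:
$$\sigma_C^2(v)=\frac{1}{v^2}\int_0^1\!\!\int_0^1 \mathrm{Cov}\bigl(\mathbb{C}(u,v),\mathbb{C}(w,v)\bigr)\,\mathrm{d}u\,\mathrm{d}w=\frac{2}{v^2}\int_0^1\!\!\int_0^w \mathrm{Cov}\bigl(\mathbb{C}(u,v),\mathbb{C}(w,v)\bigr)\,\mathrm{d}u\,\mathrm{d}w .$$
The second equality uses the symmetry of the covariance in $(u,w)$, and Fubini is justified because all quantities are bounded. I will then expand the covariance for $u\le w$ into nine terms using the bridge covariance, with $C(u,1)=u$, $C(1,v)=v$ and $u\le w$:
\begin{itemize}[noitemsep]
\item $\mathrm{Cov}(\mathbb{B}(u,v),\mathbb{B}(w,v))=C(u,v)(1-C(w,v))$;
\item $\mathrm{Cov}(\mathbb{B}(1,v),\mathbb{B}(1,v))=v(1-v)$;
\item $\mathrm{Cov}(\mathbb{B}(u,1),\mathbb{B}(w,1))=u(1-w)$;
\item $\mathrm{Cov}(\mathbb{B}(u,v),\mathbb{B}(1,v))=C(u,v)(1-v)$, and similarly with $w$;
\item $\mathrm{Cov}(\mathbb{B}(w,v),\mathbb{B}(u,1))=C(u,v)-C(w,v)u$;
\item $\mathrm{Cov}(\mathbb{B}(u,v),\mathbb{B}(w,1))=C(u,v)(1-w)$;
\item $\mathrm{Cov}(\mathbb{B}(1,v),\mathbb{B}(w,1))=C(w,v)-vw$, and similarly with $u$.
\end{itemize}
Attaching the coefficients $\partial_1C,\partial_2C$ with their signs reproduces the displayed integrand term by term.

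The main obstacle is not the Gaussian limit itself but the bookkeeping in this last step. The ordering $u\le w$ must be used consistently in the $\min$'s, and I must check which partial derivative pairs with which argument. In particular, the term $-\partial_1C(u,v)[C(u,v)-C(w,v)u]$ arises from $\mathrm{Cov}(\mathbb{B}(w,v),\mathbb{B}(u,1))$, where the minimum in the first coordinate is $u$. A minor technical point is that $\partial_1C$ may be discontinuous at $u\in\{0,1\}$. This is harmless: Segers' result gives weak convergence with the derivatives defined arbitrarily (bounded by $1$) on the boundary, and the boundary has Lebesgue measure zero in the $u$-integral. For $v=1$, $\partial_2C$ may not be continuous at $v=1$. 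In that case one can check directly that $\widehat C_n(u,1)=\lfloor nu\rfloor/n$ is deterministic, so the limit is degenerate, consistent with the formula since $1-v=0$ kills the relevant terms; I would treat this boundary case separately.
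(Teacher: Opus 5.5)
Your proposal is correct and follows the paper's proof: Segers' Proposition 3.1 for the empirical copula process, passage to the limit $-\frac{1}{v}\int_0^1 G_C(u,v)\,du$, symmetrization of the double integral over $u\le w$, and the same nine Brownian-bridge covariances with the same coefficient bookkeeping. Your added care about the integration functional, the boundary values of $\partial_1C$ and the case $v=1$ just makes explicit what the paper leaves implicit; $v=1$ needs no separate treatment, since Segers' result already covers it and $B_C(1,1)=0$ makes the value of $\partial_2C(u,1)$ irrelevant.
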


The proof of Theorem~\ref{thm:GaussChiV} is given in
Appendix~\ref{sec:proof_GaussChiV}.

To estimate $\chi^{Y\rightarrow X}$, one can select a fixed proportion $v$ of the data, with $v$ close to zero, and use the estimator $\widehat\chi^{Y\rightarrow X}_n(v)$. The accuracy of such an estimator relies on the selection criterion used. One can for instance select the value of $v$ that minimizes the asymptotic quadratic risk of the estimator $\widehat\chi^{Y\rightarrow X}_n(v)$. Thanks to Theorem~\ref{thm:GaussChiV}, along with the uniform integrability of the empirical copula process~\citep{GarcinNicolas}, we have
\begin{equation}\label{eq:riskQuad}\E\left[\left(\widehat\chi^{Y\rightarrow X}_n(v)-\chi^{Y\rightarrow X}\right)^2\right] = V_n(v) + \left[\int_0^1\left(\partial_2C(u,0) - \frac{C(u,v)}{v}\right)du\right]^2,
\end{equation}
with $\lim_{n\rightarrow\infty} nV_n(v)=\sigma^2_C(v)$. If the copula is two times differentiable, a Taylor expansion gives that the squared bias is bounded by $\int_0^1v^2\sup_{\xi\in[0,v]}\partial_2^2C(u,\xi)^2du/4$. Of course, the expression of the quadratic risk depends on $C$, which is unknown, but it is possible to express this risk explicitly for various parametric copulas and then to select the proportion $v$ minimizing the obtained quadratic risk, thus following a semi-parametric approach~\citep{GarcinNicolas}.

If $C$ is the independent copula, that is $C(u,v)=uv$, then we have:
$$\begin{array}{ccl}
\sigma^2_C(v) & = & \frac{2}{v^2} \int_0^1\int_0^w\left\{ uv-3uv^2w+uvw+uv^2 \right. \\
 & & -2uvw + 4uv^2w -2uv^2 \\
 & & \left. +2uv^2w -2uv^2w \right\} dudw \\
 & = & \frac{2}{v^2} \int_0^1\int_0^w\left\{ uv+uv^2w-uvw-uv^2 \right\} dudw \\
 & = & \frac{1}{v^2} \int_0^1\left\{ w^2v+v^2w^3-vw^3-w^2v^2 \right\} du \\
 & = & \frac{1}{v^2} \left\{ \frac{v}{3}+\frac{v^2}{4}-\frac{v}{4}-\frac{v^2}{3} \right\} \\
 & = & \frac{1}{12v}-\frac{1}{12}.
\end{array}$$
We remark that the variance is increasing when $v$ becomes smaller. When $v=1$, the variance is minimal and equal to zero. It is consistent with the intuition because, whatever the copula, $\chi^{Y\rightarrow X}(1)$, as well as $\widehat\chi_n^{Y\rightarrow X}(1)$, is trivially equal to 1/2, which is the conditional tail expectation of the independent copula. We also easily obtain that the bias appearing in equation~\eqref{eq:riskQuad} is equal to zero whatever $v$, so that the quadratic risk is limited to the above variance. For this copula, the best choice for $v$ in the nonparametric estimator would thus be $v=1$. 


For other copulas, obtaining $\sigma^2_C(v)$ is rarely straightforward  and a numerical calculation of this quantity is to be preferred. However, we can give another example for which we have an explicit formula of the quadratic risk, with the Farlie-Gumbel-Morgenstern (FGM) copula~\citep{Farlie,Gumbel60,Morgenstern}, which has a widespread use and many extensions~\citep{AmblardGirard,Hurlimann}. It is defined, for $\theta\in[-1,1]$, by $C(u,v)=uv+\theta u(1-u)v(1-v)$. Using SymPy, Python's library dedicated to symbolic computation, we get the following expression for $\sigma^2_C(v)$ in this FGM specification:
$$\sigma^2_C(v) = \frac{12 \theta^{2} v^{3} - 24 \theta^{2} v^{2} + 17 \theta^{2} v - 5 \theta^{2} - 15 v + 15}{180 v}.$$

In the case where $\theta=1$, we more simply have $\sigma^2_C(v) = (6v^3-12v^2+v+5)/90 v$. The bias for the FGM copula is
$$\int_0^1\partial_2C(u,0)du - \int_0^1\frac{C(u,v)}{v}du = \int_0^1 \left[u+\theta u(1-u)\right]du-\int_0^1 \left[u+\theta u(1-u)(1-v)\right]du =\frac{\theta v}{6}$$
and its conditional tail expectation is $1/2-\theta/6$, after equation~\eqref{eq:tcte_limit_copula}.

Figure~\ref{fig:FGM_Clayton} displays the asymptotic variance, squared bias, and quadratic risk of the estimator as functions of the threshold parameter $v$. Depending on the graph, data are generated either with the Clayton copula, $C(u,v) = (u^{-\theta} + v^{-\theta} - 1)^{-1/\theta}$ with $\theta=0.2$, or with the FGM copula with parameter $\theta = 1$. In the Clayton case, equation~\eqref{eq:tcte_limit_copula} yields $\chi^{Y \rightarrow X} = 0$, illustrating the lower-tail dependence structure of this copula. The asymptotic variance, squared bias, and quadratic risk defined in equation~\eqref{eq:riskQuad} are computed numerically and plotted as functions of $v$. The quadratic risk is minimized at $v = 0.030$ for $n = 100$ and at $v = 0.004$ for $n = 1{,}000$. These relatively small optimal thresholds reflect the stronger bias component of the estimator under the Clayton dependence structure. 
In the FGM case, for $n = 100$, the asymptotic quadratic risk is minimized at $v \approx 0.23$. When the sample size increases to $n = 1{,}000$, the asymptotic variance decreases substantially and the risk-minimizing threshold shifts to a smaller value, $v \approx 0.10$, illustrating the usual bias--variance trade-off. 

\begin{figure}[htb]
    \begin{subfigure}{\textwidth}
        \centering
        \includegraphics[width=\textwidth,
                    trim={0cm 1.5cm 0cm 0cm},clip]{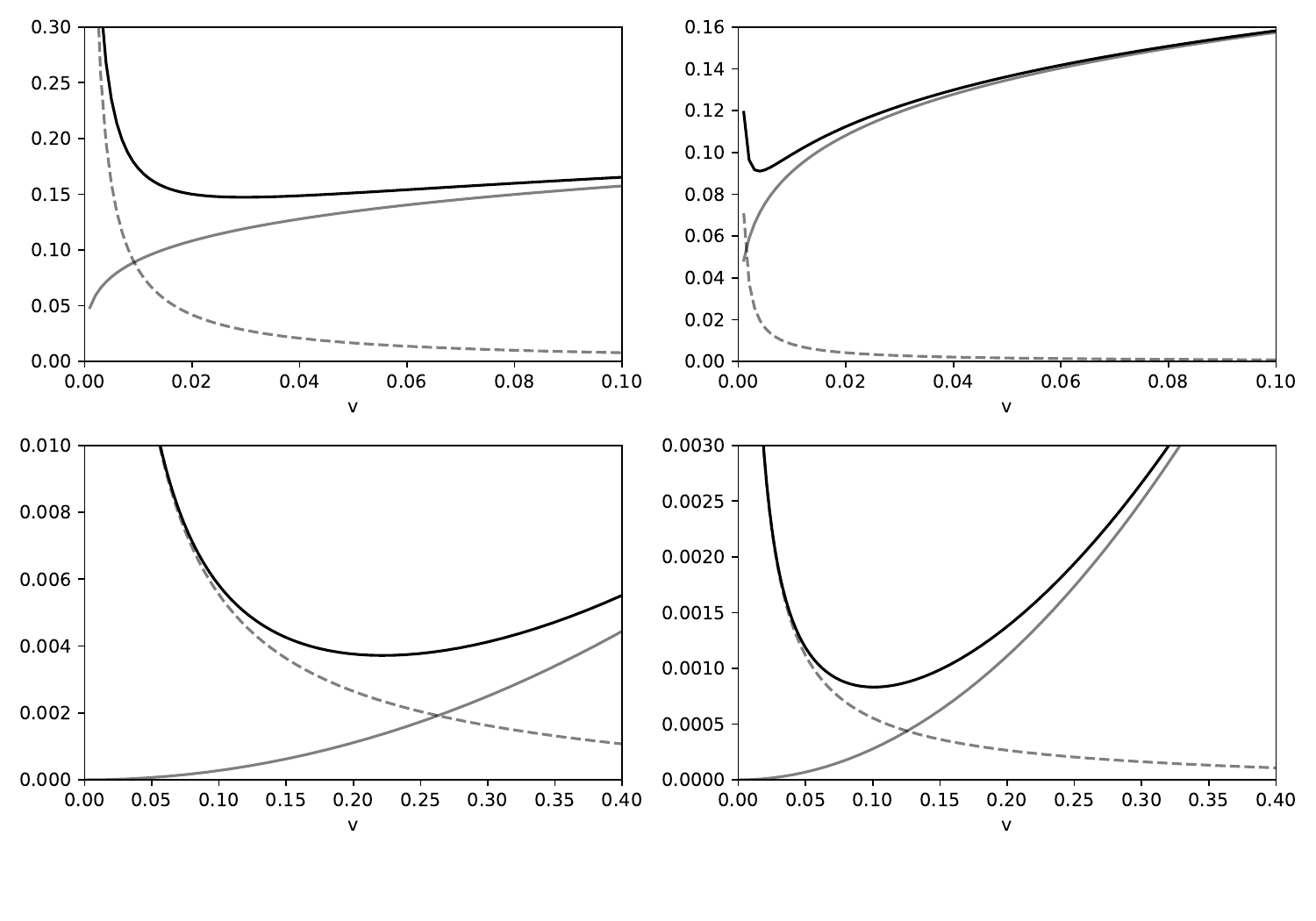} %
    \end{subfigure}
    \caption{Asymptotic variance $\sigma^2_C(v)/n$ (dashed curve), squared bias (grey curve), and quadratic risk (black curve) as functions of the threshold $v$. Top graphs correspond to the Clayton copula with $\theta = 0.2$ for $n = 100$ (left) and $n = 1{,}000$ (right). Bottom graphs correspond to the FGM copula with $\theta = 1$ for $n = 100$ (left) and $n = 1{,}000$ (right).}\label{fig:FGM_Clayton}
\end{figure}

Theorem~\ref{thm:GaussChiV} provides the asymptotic distribution of $\widehat\chi^{Y\rightarrow X}_n(v)$. In the case where the copula $C$ is exchangeable, the tail dependence is symmetric and $\widehat\chi^{X\rightarrow Y}_n(v)$, the empirical estimator of 
$$\chi^{X\rightarrow Y}(v)=\E[F_Y(Y)|F_X(X)\leq v]=1-\frac{1}{v}\int_0^1 C(v,u)du,$$ 
follows the same asymptotic distribution as $\widehat\chi^{Y\rightarrow X}_n(v)$. But without this exchangeability assumption, one has to consider the permuted copula $C_P$, which is defined by $C_P(u,v)=C(v,u)$. Then, starting from $\widehat\chi^{Y\rightarrow X}_n(v)$, one gets $\widehat\chi^{X\rightarrow Y}_n(v)$ simply by replacing $C$ by $C_P$. Therefore, the same conditions as the ones of Theorem~\ref{thm:Consistent} apply to have the consistency of $\widehat\chi^{X\rightarrow Y}_n(v)$ and $\widehat\chi^{X\rightarrow Y}_n(v_n)$. Regarding the asymptotic normality, Theorem~\ref{thm:GaussChiV} also holds for $\widehat\chi^{X\rightarrow Y}_n(v)$ if one replaces $\sigma^2_C(v)$ by
$$\begin{array}{ccl}
\sigma^2_{C_P}(v) & = & \frac{2}{v^2} \int_0^1\int_0^w\left\{ C_P(u,v)(1-C_P(w,v))+\partial_2C_P(u,v)\partial_2C_P(w,v)v(1-v) \right. \\
 & & +\partial_1C_P(u,v)\partial_1C_P(w,v)u(1-w)  \\
 & & -\partial_2C_P(u,v)C_P(w,v)(1-v) -\partial_2C_P(w,v)C_P(u,v)(1-v) \\
 & & -\partial_1C_P(u,v)[C_P(u,v)-C_P(w,v)u] -\partial_1C_P(w,v)C_P(u,v)(1-w) \\
 & & \left. +\partial_2C_P(u,v)\partial_1C_P(w,v)(C_P(w,v)-vw) +\partial_2C_P(w,v)\partial_1C_P(u,v)(C_P(u,v)-vu) \right\} dudw \\
 & = & \frac{2}{v^2} \int_0^1\int_0^w\left\{ C(v,u)(1-C(v,w))+\partial_1C(v,u)\partial_1C(v,w)v(1-v) \right. \\
 & & +\partial_2C(v,u)\partial_2C(v,w)u(1-w)  \\
 & & -\partial_1C(v,u)C(v,w)(1-v) -\partial_1C(v,w)C(v,u)(1-v) \\
 & & -\partial_2C(v,u)[C(v,u)-C(v,w)u] -\partial_2C(v,w)C(v,u)(1-w) \\
 & & \left. +\partial_1C(v,u)\partial_2C(v,w)(C(v,w)-vw) +\partial_1C(v,w)\partial_2C(v,u)(C(v,u)-vu) \right\} dudw.
\end{array}$$

Finally, from the estimation of $\chi^{Y\rightarrow X}$ and $\chi^{X\rightarrow Y}$, one can determine which variable, between $X$ and $Y$, has the greatest tail-focused impact on the distribution of the other variable. In a statistical perspective, we have first to determine whether $\widehat\chi_n^{Y\rightarrow X}(v)$ and $\widehat\chi_n^{X\rightarrow Y}(v)$ are significantly different from $1/2$, using Theorem~\ref{thm:GaussChiV}. Then, one has to study the statistical significance of $\widehat\chi_n(X,Y)(u,v):=\widehat\chi^{Y\rightarrow X}(v) - \widehat\chi^{X\rightarrow Y}(u)$. This estimator of the DTD measure $\chi(X,Y)(u,v):=\chi^{Y\rightarrow X}(v)-\chi^{X\rightarrow Y}(u)$ has the same properties of consistency as the ones of Theorem~\ref{thm:Consistent}. For the asymptotic normality, it is slightly more complex since the distributions of $\widehat\chi_n^{Y\rightarrow X}(v)$ and $\widehat\chi_n^{X\rightarrow Y}(u)$ are not independent of each other. Theorem~\ref{thm:GaussChiV_Diff}, proved in Appendix~\ref{sec:proof_GaussChiV_Diff}, gives more details about this. In particular, we can decompose the asymptotic variance in $\sigma^2_C(v)$, $\sigma^2_{C_P}(u)$, and a third part that results from the dependence between $\widehat\chi_n^{Y\rightarrow X}(v)$ and $\widehat\chi_n^{X\rightarrow Y}(u)$.

\begin{theorem}\label{thm:GaussChiV_Diff}
Let the bivariate copula $C$ have partial derivatives, with $\partial_1C$ continuous on $(0,1)\times[0,1]$ and $\partial_2C$ continuous on $[0,1]\times(0,1)$. Then, for all $u,v\in(0,1]^2$, 
$$\sqrt{n}\left(\widehat\chi_n(X,Y)(u,v)-\chi(X,Y)(u,v)\right) \overset{d}{\longrightarrow} \mathcal N\left(0,\sigma^2_C(v)+\sigma^2_{C_P}(u)+2\mathcal V_C(u,v)\right),$$
where 
$$\begin{array}{ccl}
\mathcal V_C(u,v) & = & - \frac{1}{uv}\int_0^u\int_0^v \left\{C(z,w)-C(u,w)C(z,v) + \partial_2C(u,w)\partial_2C(z,v)w(1-v)\right\}dwdz \\
 & & - \frac{1}{uv}\int_0^u\int_v^1 \left\{C(z,v)-C(u,w)C(z,v) + \partial_2C(u,w)\partial_2C(z,v)v(1-w)\right\}dwdz \\
 & & - \frac{1}{uv}\int_u^1\int_0^v \left\{C(u,w)(1-C(z,v)) + \partial_2C(u,w)\partial_2C(z,v)w(1-v)\right\}dwdz \\
 & & - \frac{1}{uv}\int_u^1\int_v^1 \left\{C(u,v)-C(u,w)C(z,v) + \partial_2C(u,w)\partial_2C(z,v)v(1-v)\right\}dwdz \\
 
 & & - \frac{1}{uv}\int_0^u\int_0^1 \partial_1C(u,w)\partial_1C(z,v)z(1-u)dwdz \\
  & & - \frac{1}{uv}\int_u^1\int_0^1 \partial_1C(u,w)\partial_1C(z,v)u(1-z)dwdz \\
  
 & & + \frac{1}{uv}\int_0^1\int_0^v \left\{\partial_2C(u,w)(C(z,w)-C(z,v)w) +\partial_2C(z,v)C(u,w)(1-v)\right\}dwdz \\
 & & + \frac{1}{uv}\int_0^1\int_v^1 \left\{\partial_2C(u,w)C(z,v)(1-w) +\partial_2C(z,v)(C(u,v)-C(u,w)v)\right\}dwdz \\
 
 & & + \frac{1}{uv}\int_0^u\int_0^1 \left\{\partial_1C(u,w)C(z,v)(1-u) +\partial_1C(z,v)(C(z,w)-C(u,w)z)\right\}dwdz  \\
  & & + \frac{1}{uv}\int_u^1\int_0^1 \left\{\partial_1C(u,w)(C(u,v)-C(z,v)u) +\partial_1C(z,v)C(u,w)(1-z)\right\}dwdz  \\
 
 & & - \frac{1}{uv}\int_0^1\int_0^1 \left\{\partial_2C(u,w)\partial_1C(z,v)(C(z,w)-zw) + \partial_2C(z,v)\partial_1C(u,w)(C(u,v)-uv)\right\}dwdz.
\end{array}$$
\end{theorem}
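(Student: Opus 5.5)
The plan is to obtain Theorem~\ref{thm:GaussChiV_Diff} as a continuous linear functional of the empirical copula process, and to reuse the weak convergence already behind Theorem~\ref{thm:GaussChiV}. Set $\mathbb C_n=\sqrt n(\widehat C_n-C)$. Under the stated smoothness conditions on $\partial_1C$ and $\partial_2C$, the result of \cite{segers2012asymptotics} gives $\mathbb C_n\rightsquigarrow\mathbb G_C$ in $\ell^\infty([0,1]^2)$. The limit is
$$\mathbb G_C(s,t)=\mathbb B_C(s,t)-\partial_1C(s,t)\mathbb B_C(s,1)-\partial_2C(s,t)\mathbb B_C(1,t),$$
where $\mathbb B_C$ is a $C$-Brownian bridge with covariance $C(s\wedge s',t\wedge t')-C(s,t)C(s',t')$.

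By \eqref{eq:estimChiInt} and its analogue for $C_P$, we can write
$$\sqrt n\big(\widehat\chi_n(X,Y)(u,v)-\chi(X,Y)(u,v)\big)=-\frac1v\int_0^1\mathbb C_n(z,v)\,dz+\frac1u\int_0^1\mathbb C_n(u,w)\,dw.$$
The map $f\mapsto -\frac1v\int_0^1 f(z,v)dz+\frac1u\int_0^1 f(u,w)dw$ is linear and bounded on $\ell^\infty([0,1]^2)$, since $u,v>0$ are fixed. The continuous mapping theorem then yields a centred Gaussian limit $-\frac1v\int_0^1\mathbb G_C(z,v)dz+\frac1u\int_0^1\mathbb G_C(u,w)dw$. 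One small point to check is that the estimator for $X\to Y$ built from the permuted sample coincides with the functional $u\mapsto 1-\frac1u\int_0^1\widehat C_n(u,w)dw$ of the same $\widehat C_n$. This holds because permuting the coordinates of the data simply permutes the arguments of $\widehat C_n$.

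The variance of this limit splits into three parts. The first is $\mathrm{Var}(\frac1v\int\mathbb G_C(z,v)dz)$, which equals $\sigma^2_C(v)$ by Theorem~\ref{thm:GaussChiV}. The second is $\mathrm{Var}(\frac1u\int\mathbb G_C(u,w)dw)$, which equals $\sigma^2_{C_P}(u)$ by the same theorem applied to $C_P$. The third is twice the cross term, so we identify
$$\mathcal V_C(u,v)=-\frac1{uv}\int_0^1\int_0^1\mathrm{Cov}\big(\mathbb G_C(u,w),\mathbb G_C(z,v)\big)\,dw\,dz.$$
To compute it, I would expand $\mathrm{Cov}(\mathbb G_C(u,w),\mathbb G_C(z,v))$ into the nine bridge covariances. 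The $\mathbb B_C$–$\mathbb B_C$ term is $C(u\wedge z,w\wedge v)-C(u,w)C(z,v)$. The $\partial_2\partial_2$ term carries $\mathrm{Cov}(\mathbb B_C(1,w),\mathbb B_C(1,v))=w\wedge v-wv$. The $\partial_1\partial_1$ term carries $z\wedge u-zu$. The mixed terms involve $C(u\wedge z,w)-C(u,w)z$ and similar expressions, and the $\partial_2\partial_1$ cross pieces involve $\mathrm{Cov}(\mathbb B_C(1,w),\mathbb B_C(z,1))=C(z,w)-zw$.

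The main obstacle is bookkeeping rather than any conceptual step. The minima $u\wedge z$ and $w\wedge v$ force splitting the $(z,w)$ domain into the four rectangles $[0,u]\times[0,v]$, $[0,u]\times[v,1]$, $[u,1]\times[0,v]$ and $[u,1]\times[v,1]$. Terms depending only on $z\wedge u$ need only the split in $z$, and terms depending only on $w\wedge v$ need only the split in $w$. Matching each piece to the listed integrals in $\mathcal V_C$, with the correct signs from the $-\frac1v$ and $+\frac1u$ prefactors, is where errors are likely. I would verify the final expression on the independence copula, where $\mathbb G_C$ reduces to a known process, as a check.
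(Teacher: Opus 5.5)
Your proposal is correct and follows the paper's proof essentially step for step. Both arguments use the weak convergence of the empirical copula process (Segers, 2012) and the continuous mapping theorem for $f\mapsto -\frac1v\int_0^1 f(z,v)\,dz+\frac1u\int_0^1 f(u,w)\,dw$. Both then split the variance into $\sigma^2_C(v)$, $\sigma^2_{C_P}(u)$ and the cross term $\mathcal V_C(u,v)=-\frac{1}{uv}\int_0^1\int_0^1\mathrm{Cov}(G_C(u,w),G_C(z,v))\,dw\,dz$, and evaluate that cross term on the same four rectangles.

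Your bookkeeping and independence check will find one discrepancy with the printed statement, and it is a slip in the paper, not in your method. On $[u,1]\times[v,1]$ the $\partial_2C\,\partial_2C$ term carries $\mathrm{Cov}(B_C(1,w),B_C(1,v))=w\wedge v-wv=v(1-w)$, not the $v(1-v)$ printed in the fourth line of $\mathcal V_C$ (and in $\mathcal I_4$ of the paper's proof). For the independence copula, $v(1-v)$ gives $\mathcal V_C(1/2,1/2)=-7/64$, whereas the correct value is $-(1-u)(1-v)/4=-1/16$.
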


To calculate in practice the asymptotic variances of Theorems~\ref{thm:GaussChiV} and~\ref{thm:GaussChiV_Diff}, we need to estimate the partial derivatives of the copula function, $\partial_1 C$ and $\partial_2 C$. Since the true copula is unknown, these derivatives are approximated using finite-difference methods applied to the empirical copula $\widehat{C}_n$, which yields consistent estimates under mild smoothness conditions \citep{remillard2009testing}. In our empirical implementation, Sections~\ref{sec:sim-study} and~\ref{sec:data-application}, we choose the same bandwidth $h=n^{-1 / 2}$ as in \cite{remillard2009testing} and we replace each unknown partial derivative by the corresponding centered finite-difference of the empirical copula $\widehat{C}_n$. Concretely, we set
$$
\partial_1 \widehat{C}_n(u, v)= \begin{cases}\frac{\widehat{C}_n(2 h, v)}{2 h} & \text { if } u<h \\ \frac{\widehat{C}_n(u+h, v)-\widehat{C}_n(u-h, v)}{2 h} & \text { if } h \leq u \leq 1-h \\ \frac{\widehat{C}_n(1, v)-\widehat{C}_n(1-2 h, v)}{2 h} & \text { if } u>1-h\end{cases}
$$
and
$$
\partial_2 \widehat{C}_n(u, v)= \begin{cases}\frac{\widehat{C}_n(u, 2 h)}{2 h} & \text { if } v<h \\ \frac{\widehat{C}_n(u, v+h)-\widehat{C}_n(u, v-h)}{2 h} & \text { if } h \leq v \leq 1-h \\ \frac{\widehat{C}_n(u, 1)-\widehat{C}_n(u, 1-2 h)}{2 h} & \text { if } v>1-h.\end{cases}
$$

\section{Simulation Study} \label{sec:sim-study}

This section investigates the finite-sample behavior of the proposed directional tail dependence estimators through a simulation study. We consider two classes of dependence structures that are suited to assess directional effects in the tails: asymmetric copulas constructed via Khoudraji's device, and the skew-t copula.

\subsection{Khoudraji's Device}

Copula models can be made asymmetric using Khoudraji’s device \citep{khoudraji1997contributions, genest2012tests}. Let $C$ be an exchangeable copula. An asymmetric version of $C$ can be defined at all points $(u, v)$ in the unit square $[0,1]^2$ by the following expression: 
$$K_\delta(u, v) = u^\delta C\left(u^{1-\delta}, v\right),$$
where $\delta\in[0,1]$ is a parameter that introduces asymmetry into the copula. When $\delta=0$, we simply recover the copula $C$ and when $\delta=1$ it is the independent copula. We can easily generate random samples from $K_\delta$ \citep{khoudraji1997contributions, genest1998understanding}. If $(U_1,V)$ is sampled from the underlying copula $C$, with $U_1$ and $V$ uniform, and if $U_2$ is an independent uniform variable, then $K_\delta$ is the copula of the pair:
$$ \left( \max\left\{U_1^{1/(1-\delta)},U_2^{1/\delta} \right\}, V\right).$$
For the Khoudraji copula, the TCTEs admit the
expressions
$$\chi^{Y \rightarrow X}(v) =1-\frac{1}{v} \int_0^1 u^\delta C\left(u^{1-\delta}, v\right) d u,$$
and
$$\chi^{X \rightarrow Y}(v) =1-v^{\delta-1} \int_0^1 C\left(v^{1-\delta}, u\right) d u =\chi^{\widetilde X\rightarrow\widetilde Y}(v^{1-\delta}),$$
where $\widetilde X$ and $\widetilde Y$ are variables linked by the copula $C$ used in the definition of $K_{\delta}$. Considering the limit $v\rightarrow 0$, we get $\chi^{X \rightarrow Y}=\chi^{\widetilde X\rightarrow\widetilde Y}$ but the TCTE of $X$ given $Y$ may differ for the two copulas $K_{\delta}$ and $C$.

Figure~\ref{fig:simu-khoudraji} shows 5\,000 simulated observations produced by Khoudraji’s device, with a Gaussian, Clayton, or Frank copula. Within each panel the asymmetry parameter increases from left to right ($\delta = 0.1,\;0.25,\;0.5$); larger values of $\delta$ shift probability mass toward the upper–left corner of the unit square, illustrating the increasing directional asymmetry induced by the Khoudraji construction. 

\begin{figure}[htbp!]
    \begin{subfigure}{\textwidth}
        \centering
        \includegraphics[width=\textwidth,
                    trim={0cm 0.2cm 0cm 0cm},clip]{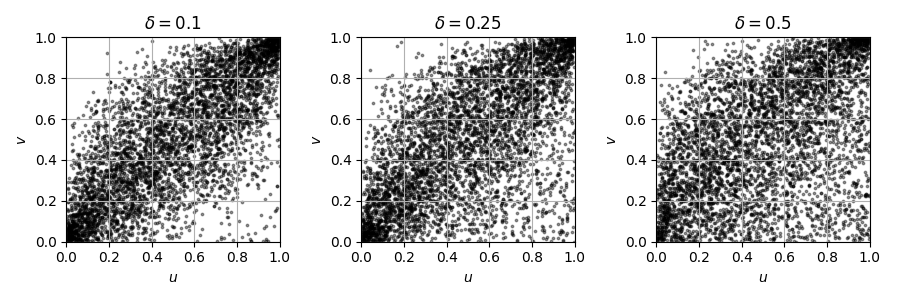} %
                    \caption{Gaussian Copula, $\rho=0.8$}
    \end{subfigure}
    \vspace{0.5cm}    
    \begin{subfigure}{\textwidth}
        \centering
        \includegraphics[width=\textwidth,
                    trim={0cm 0.2cm 0cm 0cm},clip]{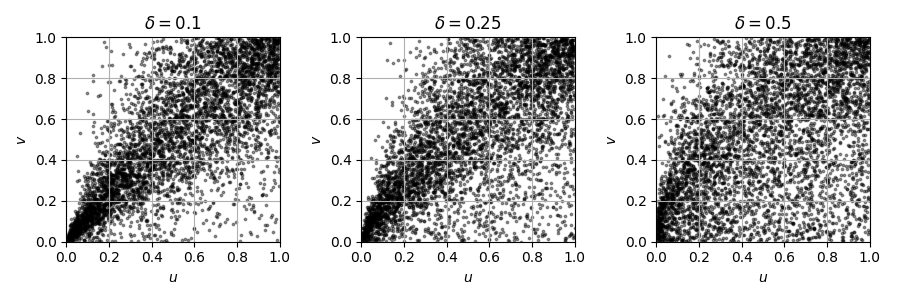} %
        \caption{ Clayton Copula, $\theta=3.5$ }
        \end{subfigure}
    \vspace{0.5cm}
    \begin{subfigure}{\textwidth}
        \centering
        \includegraphics[width=\textwidth,
                    trim={0cm 0.0cm 0cm 0cm},clip]{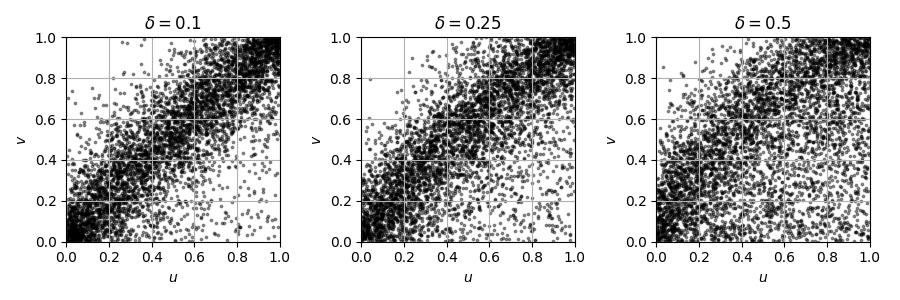} %
        \caption{Frank Copula, $\theta=10$ }
        \end{subfigure}
    \caption{Simulated samples from Khoudraji’s device, of size $n = 5{,}000$, for three types of copulas (one in each row) and three values of $\delta$ (one in each column).}\label{fig:simu-khoudraji}
\end{figure}

We perform simulations for a range of asymmetry parameters $\delta \in \{0.01, 0.10, 0.20, \ldots, 0.90, 0.99\}$ and for several families of exchangeable copulas. The corresponding parameter grids and copula specifications are summarized in Table~\ref{tab:simu-params-khoudraji}. The performance of the proposed estimators is evaluated using the root mean squared error (RMSE),
defined as
$$\text{RMSE}(\widehat{\chi}) = \sqrt{\frac{1}{N} \sum_{i=1}^{N} (\widehat{\chi}_i - \chi_i)^2},$$
where $N$ denotes the number of Monte Carlo replications.

\begin{table}[htbp!]
\caption{Parameter grids and CDF forms for the Khoudraji devices used in the simulations, with $\Phi$ and $\Phi_{\rho}$ the univariate and bivariate Gaussian CDFs.}
\label{tab:simu-params-khoudraji}
\centering
\resizebox{0.9\textwidth}{!}{%
\begin{tabular}{@{}llll@{}}
\toprule
Model & Notation & Parameter grid & Underlying copula $C(u,v)$ \\ \midrule
Gaussian & $K_\delta^{\text{Ga}}(u,v)$ & $\rho \in \{0.25, 0.50, 0.75\}$ &
$C^{\text{Ga}}_{\rho}(u,v)=\Phi_{\rho} \left(\Phi^{-1}(u),\Phi^{-1}(v)\right)$ \\
Clayton  & $K_\delta^{\text{Cl}}(u,v)$ & $\theta \in \{0.5, 1, 3\}$ &
$C^{\text{Cl}}_{\theta}(u,v)=\left(u^{-\theta}+v^{-\theta}-1\right)^{-1/\theta}$  \\
Frank    & $K_\delta^{\text{Fr}}(u,v)$ & $\theta \in \{2, 4, 10\}$ &
$C^{\text{Fr}}_{\theta}(u,v)=-\frac{1}{\theta}\,
\log \left[1+\tfrac{(e^{-\theta u}-1)(e^{-\theta v}-1)}{e^{-\theta}-1}\right]$ \\ \bottomrule
\end{tabular}}
\end{table}

Figure~\ref{fig:plot-gauss-khoudraji-chis} shows the theoretical TCTEs $\chi^{Y\rightarrow X}$, $\chi^{X\rightarrow Y}$, and the directional tail dependence $\chi(X,Y)$ for the Gaussian-Khoudraji case at four thresholds $v \in \{0.001, 0.01, 0.05, 0.1\}$, under either weak dependence ($\rho = 0.25$) or strong dependence ($\rho = 0.75$). When the asymmetry parameter $\delta$ is close to 1, both estimates of the TCTEs $\chi^{Y\rightarrow X}$ and $\chi^{X\rightarrow Y}$ are close to $1/2$ because the copula is close to the independent copula. When $\delta$ is close to 0, the copula is close to the exchangeable Gaussian copula, so that the two TCTEs $\chi^{Y\rightarrow X}$ and $\chi^{X\rightarrow Y}$ are close to each other. Therefore, there is no DTD for $\delta$ close to 0 or 1, but the DTD measure rises for intermediate values of $\delta$. Moreover, higher correlation levels shift all TCTE curves downward, indicating a greater overall tail dependence. Both the threshold choice and the strength of dependence have a substantial impact on the magnitude of the estimated coefficients. 
Table~\ref{tab:simu-gauss-khoudraji-chis} reports the bias and RMSE of $\widehat{\chi}_n^{Y\rightarrow X}$ and $\widehat{\chi}_n^{X\rightarrow Y}$ for the two thresholds $v = \sqrt{\log n / n}$ and $v = n^{-1/2}$. Bias is close to zero, and RMSE also decreases when $n$ increases. The threshold $v = \sqrt{\log n / n}$ yields slightly smaller errors than the threshold $v = n^{-1/2}$, however the difference is too small to conclude.

\begin{figure}[htbp!]
    \begin{subfigure}{\textwidth}
        \centering
        \includegraphics[width=\textwidth,
                    trim={0cm 0.8cm 0cm 0cm},clip]{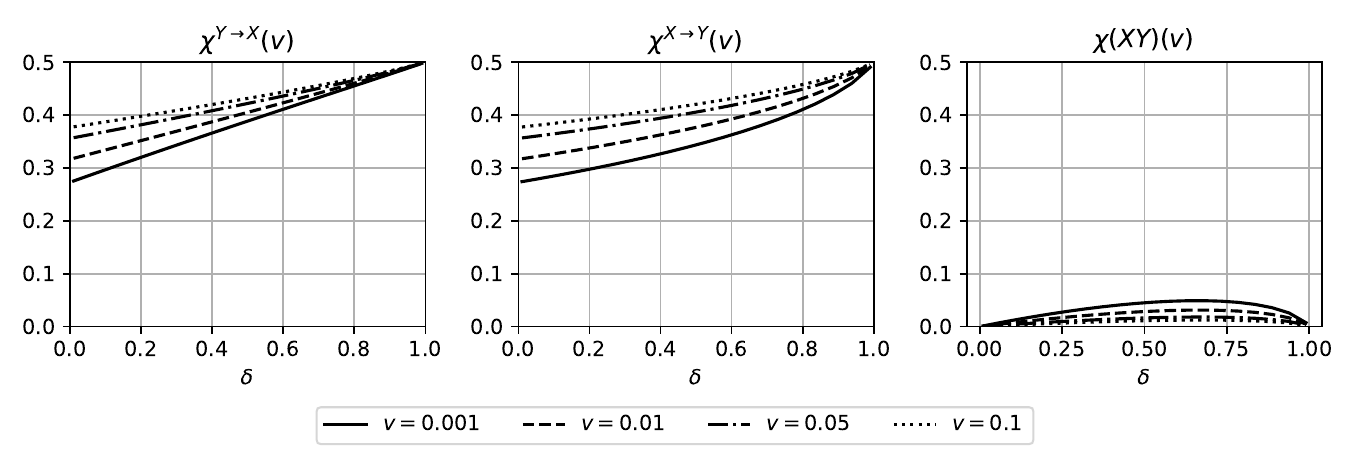} %
        \caption{ $\rho=0.25$}
    \end{subfigure}
    \vfill
    \begin{subfigure}{\textwidth}
        \centering
        \includegraphics[width=\textwidth,
                    trim={0cm 0cm 0cm 0cm},clip]{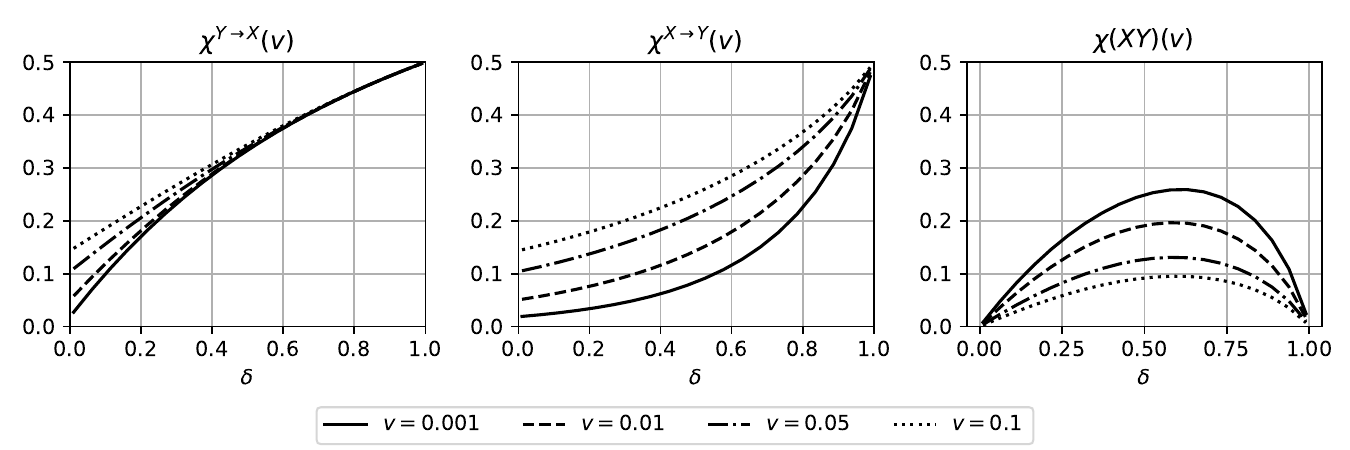} %
        \caption{ $\rho=0.75$} 
    \end{subfigure}
    \caption{Tail dependence measures (from left to right $\chi^{Y\rightarrow X}$, $\chi^{X\rightarrow Y}$, and $\chi(XY)(v)$) in the Gaussian-Khoudraji’s device, with parameter $\rho \in\{0.25,0.75\}$ for the Gaussian copula, for four values of $v$, as functions of $\delta$.}
    \label{fig:plot-gauss-khoudraji-chis} 
\end{figure}

\begin{table}[htbp!]
\setlength{\tabcolsep}{5pt} 
\caption{Bias and RMSE of the estimators $\widehat{\chi}_n^{Y\!\rightarrow\! X}(v)$ and $\widehat{\chi}_n^{X\!\rightarrow\! Y}(v)$ of $\chi^{Y\!\rightarrow\! X}(v)$ and $\chi^{X\!\rightarrow\! Y}(v)$, computed from 100 sample replications in the Gaussian-Khoudraji’s device. Results are shown for asymmetry parameters $\delta \in \{0.10, 0.25, 0.50, 0.75, 0.90\}$ and sample sizes  $n \in \{1{,}000; 5{,}000\}$. Panel~A uses the threshold $v = \sqrt{\log n\,/\,n}$ and panel~B uses $v = n^{-1/2}$. All simulations are performed using a Gaussian copula under three dependence scenarios: low ($\rho = 0.25$), moderate ($\rho = 0.50$), and strong ($\rho = 0.75$).}
\label{tab:simu-gauss-khoudraji-chis}
\resizebox{0.99\textwidth}{!}{%
\begin{tabular}{lllrrrrrrrrrr}
\toprule
\multicolumn{3}{c}{} & \multicolumn{10}{c}{$\delta$} \\
\multicolumn{3}{c}{} & \multicolumn{5}{c}{$n=1,000$} & \multicolumn{5}{c}{$n=5,000$} \\
\multicolumn{3}{c}{}   & 0.10 & 0.25 & 0.50 & 0.75 & 0.90  & \hspace{0.9cm}0.10 & 0.25 & 0.50 & 0.75 & 0.90 \\
\midrule
\multicolumn{13}{l}{\textbf{\textit{Panel A: $v = \sqrt{\log(n)/n} $}} }\\
$\rho=0.25$ & $\widehat{\chi}_n^{Y\rightarrow X}(v)$ & Bias & 0.001 & 0.003 & 0.001 & 0.002 & 0.000 & 0.002 & 0.002 & -0.004 & 0.000 & 0.002 \\
 &  & RMSE & 0.027 & 0.024 & 0.031 & 0.032 & 0.034 & 0.020 & 0.018 & 0.020 & 0.018 & 0.021 \\
 & $\widehat{\chi}_n^{X\rightarrow Y}(v)$ & Bias & 0.000 & 0.002 & 0.002 & 0.005 & 0.000 & 0.004 & 0.000 & -0.002 & 0.003 & -0.003 \\
 &  & RMSE & 0.025 & 0.027 & 0.031 & 0.030 & 0.031 & 0.017 & 0.019 & 0.020 & 0.020 & 0.020\vspace{0.25cm} \\
$\rho=0.5$ & $\widehat{\chi}_n^{Y\rightarrow X}(v)$ & Bias & 0.002 & 0.002 & 0.003 & -0.003 & -0.000 & 0.003 & 0.002 & 0.001 & 0.002 & -0.001 \\
 &  & RMSE & 0.022 & 0.028 & 0.029 & 0.031 & 0.031 & 0.015 & 0.018 & 0.022 & 0.019 & 0.020 \\
 & $\widehat{\chi}_n^{X\rightarrow Y}(v)$ & Bias & 0.003 & 0.001 & 0.001 & -0.002 & -0.003 & 0.004 & 0.002 & -0.000 & 0.003 & 0.001 \\
 &  & RMSE & 0.023 & 0.022 & 0.026 & 0.029 & 0.027 & 0.017 & 0.014 & 0.019 & 0.016 & 0.018\vspace{0.25cm} \\
$\rho=0.75$ & $\widehat{\chi}_n^{Y\rightarrow X}(v)$ & Bias & 0.002 & 0.000 & -0.002 & 0.003 & 0.004 & 0.001 & 0.003 & -0.005 & -0.001 & 0.004 \\
 &  & RMSE & 0.022 & 0.023 & 0.031 & 0.032 & 0.029 & 0.014 & 0.019 & 0.020 & 0.018 & 0.020 \\
 & $\widehat{\chi}_n^{X\rightarrow Y}(v)$ & Bias & 0.002 & 0.002 & -0.004 & 0.001 & 0.004 & 0.002 & 0.002 & -0.000 & 0.001 & 0.002 \\
 &  & RMSE & 0.016 & 0.017 & 0.021 & 0.025 & 0.028 & 0.009 & 0.010 & 0.011 & 0.015 & 0.016\vspace{0.25cm} \\
\multicolumn{13}{l}{\textbf{\textit{Panel B: $v = 1/ \sqrt{n} $}} } \\
$\rho=0.25$ & $\widehat{\chi}_n^{Y\rightarrow X}(v)$ & Bias & 0.004 & 0.011 & 0.014 & 0.011 & 0.011 & 0.010 & 0.006 & 0.006 & 0.000 & 0.005 \\
 &  & RMSE & 0.046 & 0.048 & 0.052 & 0.049 & 0.045 & 0.029 & 0.032 & 0.032 & 0.033 & 0.031 \\
 & $\widehat{\chi}_n^{X\rightarrow Y}(v)$ & Bias & 0.015 & 0.016 & 0.005 & 0.009 & 0.010 & 0.001 & 0.004 & 0.007 & 0.006 & 0.004 \\
 &  & RMSE & 0.042 & 0.044 & 0.044 & 0.050 & 0.053 & 0.032 & 0.031 & 0.030 & 0.033 & 0.029\vspace{0.25cm} \\
$\rho=0.5$ & $\widehat{\chi}_n^{Y\rightarrow X}(v)$ & Bias & 0.010 & 0.013 & 0.013 & 0.016 & 0.014 & 0.006 & 0.010 & 0.014 & 0.005 & 0.006 \\
 &  & RMSE & 0.041 & 0.046 & 0.046 & 0.046 & 0.050 & 0.029 & 0.035 & 0.032 & 0.031 & 0.036 \\
 & $\widehat{\chi}_n^{X\rightarrow Y}(v)$ & Bias & 0.016 & 0.017 & 0.018 & 0.015 & 0.019 & 0.005 & 0.005 & 0.008 & 0.008 & 0.006 \\
 &  & RMSE & 0.037 & 0.039 & 0.040 & 0.053 & 0.047 & 0.026 & 0.027 & 0.027 & 0.029 & 0.037\vspace{0.25cm} \\
$\rho=0.75$ & $\widehat{\chi}_n^{Y\rightarrow X}(v)$ & Bias & 0.015 & 0.008 & 0.014 & 0.014 & 0.015 & 0.010 & 0.003 & 0.009 & 0.009 & 0.001 \\
 &  & RMSE & 0.031 & 0.041 & 0.050 & 0.051 & 0.047 & 0.022 & 0.032 & 0.032 & 0.039 & 0.037 \\
 & $\widehat{\chi}_n^{X\rightarrow Y}(v)$ & Bias & 0.017 & 0.015 & 0.018 & 0.010 & 0.013 & 0.009 & 0.008 & 0.009 & 0.003 & 0.006 \\
 &  & RMSE & 0.021 & 0.021 & 0.031 & 0.034 & 0.047 & 0.010 & 0.013 & 0.017 & 0.026 & 0.027 \\
\bottomrule
\end{tabular}}
\end{table}

Figure~\ref{fig:plot-clayton-khoudraji-chis} repeats the analysis for a Clayton copula with dependence parameters $\theta = 0.5$ (low), $1$ (moderate), and $3$ (strong). The directional gap between $\chi^{Y\rightarrow X}$ and $\chi^{X\rightarrow Y}$ is larger than in the Gaussian case, possibly because the Clayton copula exhibits strong lower-tail dependence. Table~\ref{tab:simu-clayton-khoudraji-chis} presents the corresponding bias and RMSE results for the Clayton-Khoudraji model. As in the Gaussian case, the estimators exhibit relatively small bias and a clear decrease in RMSE as the sample size increases, confirming the good finite-sample performance of the proposed estimators under strong tail dependence.

\begin{figure}[htbp!]

    \begin{subfigure}{\textwidth}
        \centering
        \includegraphics[width=\textwidth,
                    trim={0cm 0.8cm 0cm 0cm},clip]{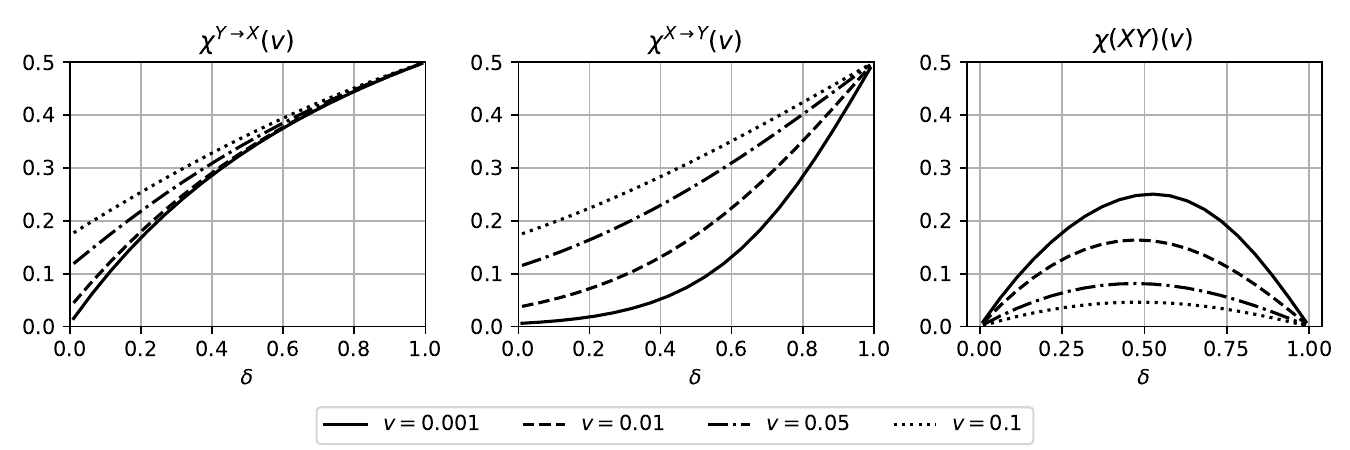} %
        \caption{ $\theta = 1$}
    \end{subfigure}
    \vfill
    \begin{subfigure}{\textwidth}
        \centering
        \includegraphics[width=\textwidth,
                    trim={0cm 0cm 0cm 0cm},clip]{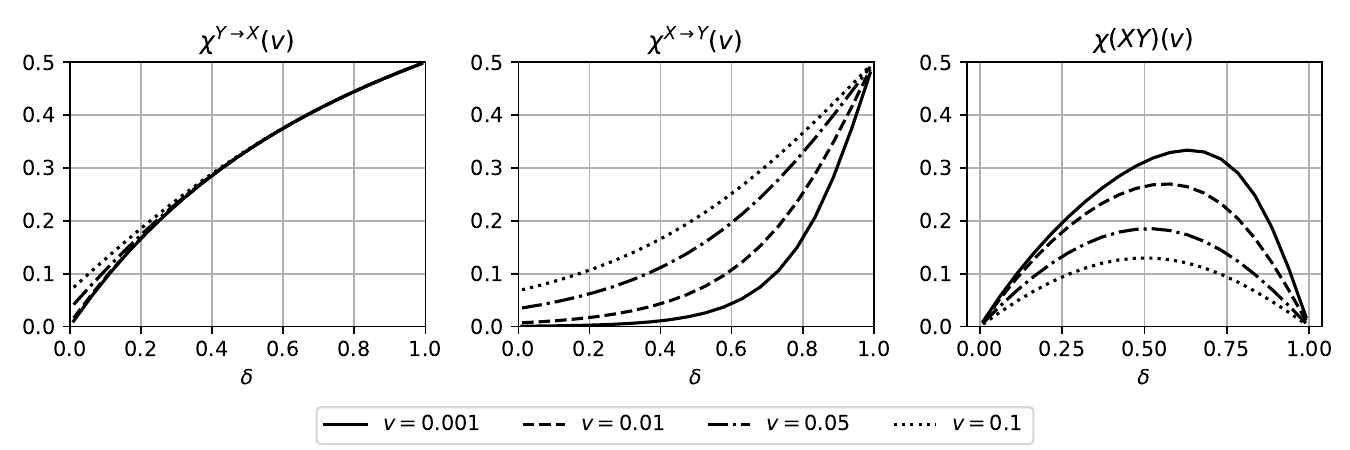} %
        \caption{ $\theta = 3$} 
    \end{subfigure}
        \caption{Tail dependence measures (from left to right $\chi^{Y\rightarrow X}$, $\chi^{X\rightarrow Y}$, and $\chi(XY)(v)$) in the Clayton-Khoudraji’s device, with parameter $\theta\in\{1,3\}$ for the Clayton copula, for four values of $v$, as functions of $\delta$.}\label{fig:plot-clayton-khoudraji-chis}
\end{figure}

\begin{table}[htbp!]
\setlength{\tabcolsep}{5pt} 
\caption{Bias and RMSE of the estimators $\widehat{\chi}_n^{Y\!\rightarrow\! X}(v)$ and $\widehat{\chi}_n^{X\!\rightarrow\! Y}(v)$ of $\chi^{Y\!\rightarrow\! X}(v)$ and $\chi^{X\!\rightarrow\! Y}(v)$, computed from 100 sample replications in the Clayton-Khoudraji’s device. Results are shown for asymmetry parameters $\delta \in \{0.10, 0.25, 0.50, 0.75, 0.90\}$ and sample sizes  $n \in \{1{,}000; 5{,}000\}$. Panel~A uses the threshold $v = \sqrt{\log n\,/\,n}$ and panel~B uses $v = n^{-1/2}$. All simulations are performed using a Clayton copula under three dependence scenarios: low ($\theta = 0.5$), moderate ($\theta = 1$), and strong ($\theta = 3$).}
    \label{tab:simu-clayton-khoudraji-chis}
\resizebox{0.99\textwidth}{!}{%
\begin{tabular}{lllrrrrrrrrrr}
\toprule
\multicolumn{3}{c}{} & \multicolumn{10}{c}{$\delta$} \\
\multicolumn{3}{c}{} & \multicolumn{5}{c}{$n=1,000$} & \multicolumn{5}{c}{$n=5,000$} \\
\multicolumn{3}{c}{}   & 0.10 & 0.25 & 0.50 & 0.75 & 0.90  & \hspace{0.9cm}0.10 & 0.25 & 0.50 & 0.75 & 0.90 \\
\midrule
\multicolumn{13}{l}{\textbf{\textit{Panel A: $v = \sqrt{\log(n)/n} $}} }\\
$\theta=0.5$ & $\widehat{\chi}_n^{Y\rightarrow X}(v)$ & Bias & -0.002 & -0.002 & 0.002 & -0.001 & -0.000 & 0.003 & -0.000 & 0.001 & -0.001 & -0.001 \\
 &  & RMSE & 0.027 & 0.031 & 0.032 & 0.028 & 0.031 & 0.019 & 0.022 & 0.020 & 0.020 & 0.020 \\
 & $\widehat{\chi}_n^{X\rightarrow Y}(v)$ & Bias & 0.000 & 0.002 & -0.000 & -0.000 & -0.002 & 0.006 & 0.000 & 0.000 & -0.003 & -0.000 \\
 &  & RMSE & 0.026 & 0.032 & 0.029 & 0.031 & 0.036 & 0.018 & 0.020 & 0.021 & 0.020 & 0.019\vspace{0.25cm} \\
$\theta=1$ & $\widehat{\chi}_n^{Y\rightarrow X}(v)$ & Bias & 0.002 & 0.004 & -0.000 & 0.003 & 0.004 & -0.000 & 0.001 & -0.004 & -0.001 & 0.003 \\
 &  & RMSE & 0.024 & 0.027 & 0.031 & 0.028 & 0.026 & 0.014 & 0.019 & 0.021 & 0.021 & 0.020 \\
 & $\widehat{\chi}_n^{X\rightarrow Y}(v)$ & Bias & -0.001 & 0.001 & 0.001 & 0.004 & 0.001 & 0.002 & -0.001 & -0.000 & -0.002 & 0.001 \\
 &  & RMSE & 0.021 & 0.024 & 0.026 & 0.029 & 0.032 & 0.013 & 0.013 & 0.018 & 0.020 & 0.020\vspace{0.25cm} \\
$\theta=3$ & $\widehat{\chi}_n^{Y\rightarrow X}v$ & Bias & 0.001 & -0.000 & -0.001 & -0.002 & -0.002 & 0.000 & 0.001 & 0.003 & 0.002 & -0.000 \\
 &  & RMSE & 0.022 & 0.025 & 0.027 & 0.032 & 0.034 & 0.013 & 0.018 & 0.017 & 0.019 & 0.019 \\
 & $\widehat{\chi}_n^{X\rightarrow Y}(v)$ & Bias & 0.001 & -0.000 & 0.001 & 0.002 & 0.000 & 0.001 & 0.002 & 0.001 & 0.001 & -0.000 \\
 &  & RMSE & 0.007 & 0.010 & 0.020 & 0.025 & 0.030 & 0.003 & 0.005 & 0.009 & 0.016 & 0.018 \\
\multicolumn{13}{l}{\textbf{\textit{Panel B: $v = 1/ \sqrt{n} $}} } \\
$\theta=0.5$ & $\widehat{\chi}_n^{Y\rightarrow X}(v)$ & Bias & 0.017 & 0.005 & 0.025 & 0.009 & 0.010 & 0.004 & -0.004 & 0.005 & -0.000 & -0.001 \\
 &  & RMSE & 0.045 & 0.050 & 0.052 & 0.052 & 0.049 & 0.027 & 0.033 & 0.033 & 0.035 & 0.035 \\
 & $\widehat{\chi}_n^{X\rightarrow Y}(v)$ & Bias & 0.016 & 0.020 & 0.007 & 0.008 & 0.015 & 0.013 & 0.006 & 0.010 & 0.009 & 0.006 \\
 &  & RMSE & 0.039 & 0.042 & 0.048 & 0.050 & 0.050 & 0.028 & 0.029 & 0.035 & 0.033 & 0.034\vspace{0.25cm} \\
$\theta=1$ & $\widehat{\chi}_n^{Y\rightarrow X}(v)$ & Bias & 0.013 & 0.028 & 0.012 & 0.007 & 0.007 & 0.009 & 0.008 & 0.007 & 0.008 & 0.009 \\
 &  & RMSE & 0.039 & 0.047 & 0.052 & 0.047 & 0.050 & 0.023 & 0.032 & 0.034 & 0.036 & 0.036 \\
 & $\widehat{\chi}_n^{X\rightarrow Y}(v)$ & Bias & 0.015 & 0.020 & 0.017 & 0.008 & 0.001 & 0.007 & 0.009 & 0.008 & 0.005 & 0.006 \\
 &  & RMSE & 0.028 & 0.032 & 0.049 & 0.047 & 0.046 & 0.015 & 0.018 & 0.025 & 0.031 & 0.033\vspace{0.25cm}\\
$\theta=3$ & $\widehat{\chi}_n^{Y\rightarrow X}v$ & Bias & 0.019 & 0.014 & 0.012 & 0.010 & 0.017 & 0.007 & 0.001 & 0.003 & 0.009 & 0.003 \\
 &  & RMSE & 0.036 & 0.049 & 0.050 & 0.050 & 0.048 & 0.022 & 0.030 & 0.034 & 0.035 & 0.035 \\
 & $\widehat{\chi}_n^{X\rightarrow Y}(v)$ & Bias & 0.019 & 0.018 & 0.020 & 0.013 & 0.016 & 0.009 & 0.010 & 0.010 & 0.010 & 0.003 \\
 &  & RMSE & 0.005 & 0.010 & 0.020 & 0.033 & 0.044 & 0.002 & 0.004 & 0.010 & 0.021 & 0.028 \\
\bottomrule
\end{tabular}}
\end{table}

Finally, Figure~\ref{fig:plot-frank-khoudraji-chis} examines the Frank copula for three levels of dependence: low ($\theta = 2$), moderate ($\theta = 4$), and strong ($\theta = 10$). Despite the absence of tail dependence in the Frank family, the directional tail coefficients respond sharply to changes in the asymmetry parameter $\delta$, demonstrating that the proposed measures remain sensitive to directional features driven purely by asymmetry. Table~\ref{tab:simu-frank-khoudraji-chis} reports the associated bias and RMSE values for $\widehat{\chi}_n^{Y\rightarrow X}$ and $\widehat{\chi}_n^{X\rightarrow Y}$, further confirming the accuracy and stability of the estimators across a wide range of dependence structures.

\begin{figure}[htbp!]
    \begin{subfigure}{\textwidth}
        \centering
        \includegraphics[width=\textwidth,
                    trim={0cm 0.8cm 0cm 0cm},clip]{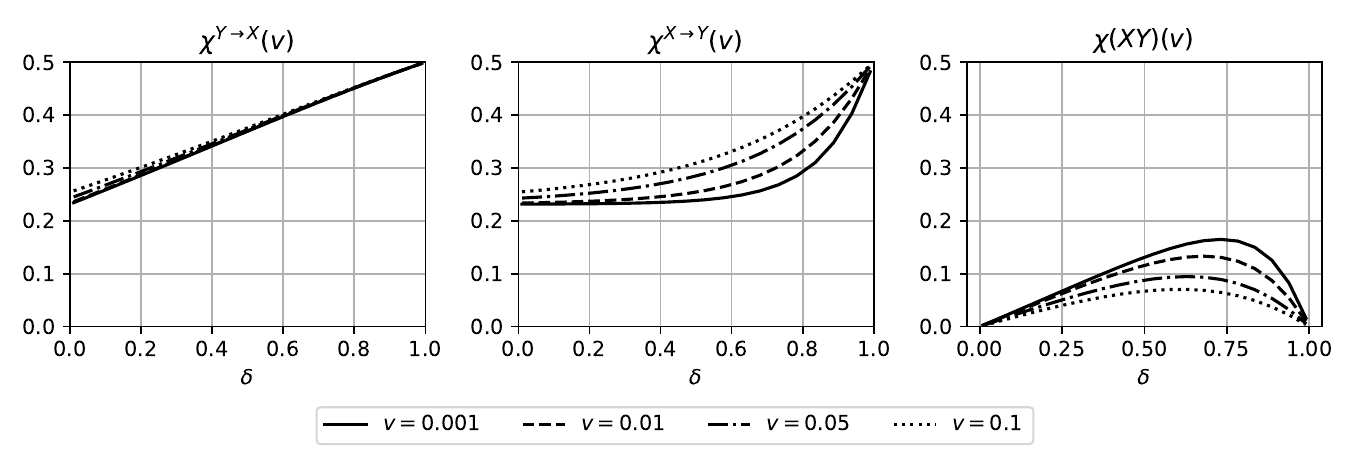} %
        \caption{ $\theta=4$} 
    \end{subfigure}
    \vfill
    \begin{subfigure}{\textwidth}
        \centering
        \includegraphics[width=\textwidth,
                    trim={0cm 0cm 0cm 0cm},clip]{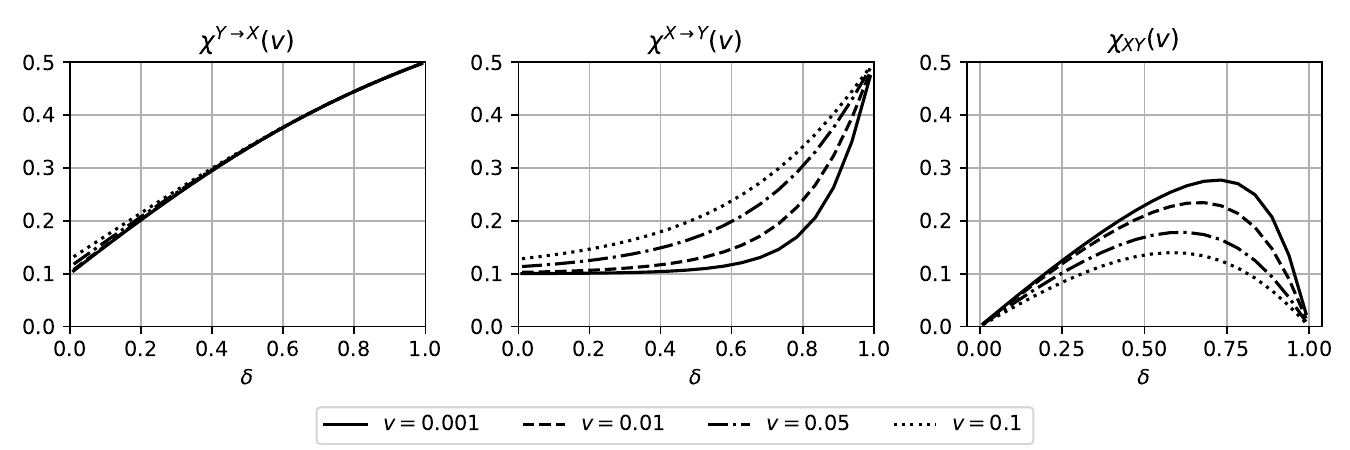} %
        \caption{ $\theta=10$}
    \end{subfigure}\caption{Tail dependence measures (from left to right $\chi^{Y\rightarrow X}$, $\chi^{X\rightarrow Y}$, and $\chi(XY)(v)$) in the Frank-Khoudraji’s device, with parameter $\theta\in\{4,10\}$ for the Frank copula, for four values of $v$, as functions of $\delta$.}\label{fig:plot-frank-khoudraji-chis}
\end{figure}

\begin{table}[htbp!]
\setlength{\tabcolsep}{5pt} 
\caption{Bias and RMSE of the estimators $\widehat{\chi}_n^{Y\!\rightarrow\! X}(v)$ and $\widehat{\chi}_n^{X\!\rightarrow\! Y}(v)$ of $\chi^{Y\!\rightarrow\! X}(v)$ and $\chi^{X\!\rightarrow\! Y}(v)$, computed from 100 sample replications in the Frank-Khoudraji’s device. Results are shown for asymmetry parameters $\delta \in \{0.10, 0.25, 0.50, 0.75, 0.90\}$ and sample sizes  $n \in \{1{,}000; 5{,}000\}$. Panel~A uses the threshold $v = \sqrt{\log n\,/\,n}$ and panel~B uses $v = n^{-1/2}$. All simulations are performed using a Clayton copula under three dependence scenarios: low ($\theta = 2$), moderate ($\theta = 4$), and strong ($\theta = 10$).}\label{tab:simu-frank-khoudraji-chis}
\resizebox{0.99\textwidth}{!}{%
\begin{tabular}{lllrrrrrrrrrr}
\toprule
\multicolumn{3}{c}{} & \multicolumn{10}{c}{$\delta$} \\
\multicolumn{3}{c}{} & \multicolumn{5}{c}{$n=1,000$} & \multicolumn{5}{c}{$n=5,000$} \\
\multicolumn{3}{c}{}   & 0.10 & 0.25 & 0.50 & 0.75 & 0.90  & \hspace{0.9cm}0.10 & 0.25 & 0.50 & 0.75 & 0.90 \\
\midrule
\multicolumn{13}{l}{\textbf{\textit{Panel A: $v = \sqrt{\log(n)/n} $}} }\\
$\theta=2$ & $\widehat{\chi}_n^{Y\rightarrow X}(v)$ & Bias & 0.002 & 0.004 & -0.003 & -0.001 & 0.004 & -0.003 & -0.002 & -0.001 & 0.001 & 0.002 \\
 &  & RMSE & 0.031 & 0.030 & 0.027 & 0.030 & 0.028 & 0.020 & 0.019 & 0.020 & 0.021 & 0.017 \\
 & $\widehat{\chi}_n^{X\rightarrow Y}(v)$ & Bias & -0.002 & 0.001 & 0.000 & -0.003 & 0.000 & 0.000 & 0.004 & 0.002 & 0.001 & -0.003 \\
 &  & RMSE & 0.032 & 0.031 & 0.029 & 0.027 & 0.031 & 0.018 & 0.016 & 0.019 & 0.018 & 0.021\vspace{0.25cm} \\
$\theta=4$ & $\widehat{\chi}_n^{Y\rightarrow X}(v)$ & Bias & 0.003 & 0.002 & 0.004 & 0.001 & 0.004 & 0.001 & 0.004 & 0.001 & -0.001 & 0.001 \\
 &  & RMSE & 0.026 & 0.027 & 0.028 & 0.033 & 0.028 & 0.015 & 0.018 & 0.020 & 0.020 & 0.019 \\
 & $\widehat{\chi}_n^{X\rightarrow Y}(v)$ & Bias & 0.004 & 0.001 & 0.001 & -0.001 & 0.001 & 0.004 & 0.003 & 0.003 & -0.000 & -0.002 \\
 &  & RMSE & 0.022 & 0.024 & 0.025 & 0.026 & 0.031 & 0.015 & 0.015 & 0.016 & 0.018 & 0.019\vspace{0.25cm} \\
$\theta=10$ & $\widehat{\chi}_n^{Y\rightarrow X}(v)$ & Bias & -0.003 & -0.001 & 0.002 & -0.000 & -0.006 & 0.001 & -0.000 & 0.002 & -0.000 & 0.003 \\
 &  & RMSE & 0.018 & 0.026 & 0.028 & 0.031 & 0.027 & 0.012 & 0.016 & 0.019 & 0.019 & 0.020 \\
 & $\widehat{\chi}_n^{X\rightarrow Y}(v)$ & Bias & 0.000 & 0.001 & 0.002 & 0.001 & -0.003 & 0.001 & 0.003 & 0.004 & 0.002 & 0.000 \\
 &  & RMSE & 0.010 & 0.013 & 0.016 & 0.021 & 0.028 & 0.006 & 0.007 & 0.009 & 0.013 & 0.017 \\

\multicolumn{13}{l}{\textbf{\textit{Panel B: $v = 1/ \sqrt{n} $}} } \\

 $\theta=2$ & $\widehat{\chi}_n^{Y\rightarrow X}(v)$ & Bias & 0.014 & 0.013 & 0.005 & 0.011 & 0.010 & 0.003 & 0.001 & 0.005 & 0.003 & 0.007 \\
 &  & RMSE & 0.043 & 0.046 & 0.051 & 0.055 & 0.046 & 0.033 & 0.036 & 0.035 & 0.039 & 0.030 \\
 & $\widehat{\chi}_n^{X\rightarrow Y}(v)$ & Bias & 0.011 & 0.009 & 0.018 & 0.015 & 0.014 & 0.005 & 0.011 & 0.008 & 0.005 & 0.007 \\
 &  & RMSE & 0.042 & 0.045 & 0.047 & 0.045 & 0.047 & 0.029 & 0.028 & 0.032 & 0.033 & 0.033\vspace{0.25cm} \\
$\theta=4$ & $\widehat{\chi}_n^{Y\rightarrow X}(v)$ & Bias & 0.012 & 0.012 & 0.018 & 0.008 & 0.007 & 0.005 & 0.013 & 0.009 & 0.008 & 0.005 \\
 &  & RMSE & 0.041 & 0.050 & 0.054 & 0.049 & 0.047 & 0.028 & 0.033 & 0.031 & 0.034 & 0.036 \\
 & $\widehat{\chi}_n^{X\rightarrow Y}(v)$ & Bias & 0.011 & 0.012 & 0.014 & 0.012 & 0.014 & 0.015 & 0.010 & 0.010 & 0.008 & 0.009 \\
 &  & RMSE & 0.035 & 0.036 & 0.038 & 0.041 & 0.046 & 0.028 & 0.025 & 0.024 & 0.026 & 0.029\vspace{0.25cm} \\
$\theta=10$ & $\widehat{\chi}_n^{Y\rightarrow X}(v)$ & Bias & 0.018 & 0.016 & 0.014 & 0.021 & 0.015 & 0.005 & 0.007 & 0.010 & 0.007 & 0.010 \\
 &  & RMSE & 0.032 & 0.040 & 0.048 & 0.053 & 0.046 & 0.022 & 0.031 & 0.029 & 0.032 & 0.035 \\
 & $\widehat{\chi}_n^{X\rightarrow Y}(v)$ & Bias & 0.016 & 0.020 & 0.014 & 0.013 & 0.005 & 0.009 & 0.009 & 0.007 & 0.008 & 0.004 \\
 &  & RMSE & 0.020 & 0.020 & 0.021 & 0.030 & 0.042 & 0.012 & 0.014 & 0.012 & 0.019 & 0.027 \\

\bottomrule
\end{tabular}}
\end{table}

\subsection{Skew-t Copula}

To further assess the performance of the proposed directional tail dependence test, we conduct a simulation study based on the skew-$t$ copula, which allows for asymmetric tail dependence driven by skewness parameters $\alpha_1$ and $\alpha_2$ associated with each margin \citep{kollo2010parameter}.

An algorithm to simulate random samples from the skew-t copula is given in \cite{kollo2010parameter}. In our setup, we consider increasing degrees of asymmetry by setting $\alpha_1 = -\alpha_2$ and varying the magnitude of $\alpha_1$ from 0.1 to 2.0, while fixing the number of degrees of freedom at $\nu = 3$. Two correlation levels are examined, $\rho = 0.25$ and $\rho = 0.5$, to represent weak and moderate dependence, respectively. Figure~\ref{fig:simu-skewt} displays simulations.

\begin{figure}[htbp!]
    \begin{subfigure}{\textwidth}
        \centering
        \includegraphics[width=\textwidth,
                    trim={0cm 0.4cm 0cm 0cm},clip]{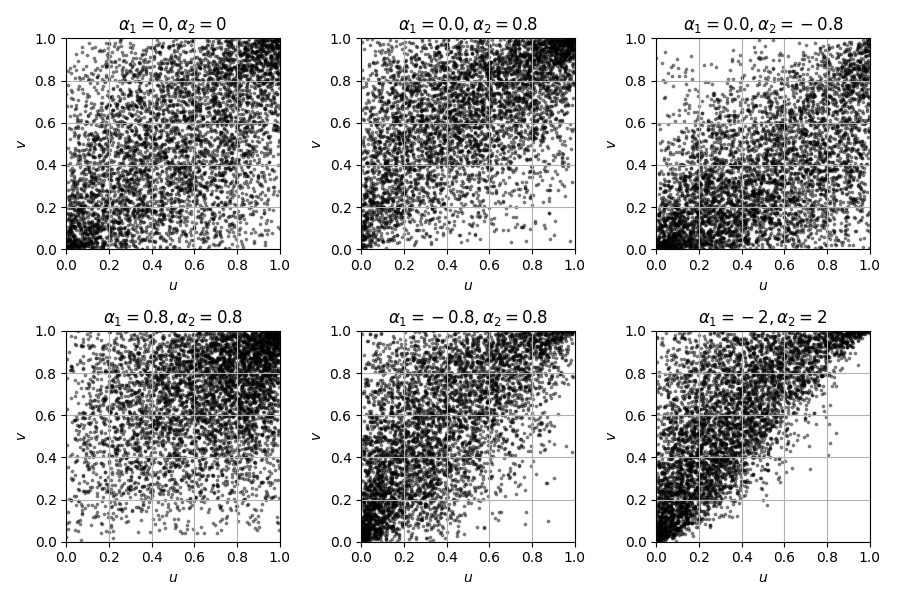} %
    \end{subfigure}
    \caption{Simulated samples from the skew-$t$ copula, of size $n = 5{,}000$, for different combinations of the skewness parameters $\alpha_1$ and $\alpha_2$, a number of degrees of freedom $\nu = 10$, and a correlation coefficient $\rho = 0.5$.}\label{fig:simu-skewt}
\end{figure}

For each configuration, we generate 100 bivariate samples of size $n = 1,000$, and compute the directional tail coefficients $\widehat{\chi}_n^{Y \rightarrow X}(v)$ and $\widehat{\chi}_n^{X \rightarrow Y}(v)$ as well as their difference $\widehat{\chi}_n(X,Y)(v)$. The empirical power of the test is then evaluated at the 5\% significance level under the null hypothesis of no directional dependence ($\chi(X,Y) = 0$), using two thresholds: $v = \sqrt{\log(n)/n}$ and $v = n^{-1/2}$.

The results, presented in Table~\ref{tab:empirical-power}, indicate a clear increase in $\widehat{\chi}_n(X,Y)(v)$ with growing skewness, confirming that the directional signal strengthens as asymmetry becomes more pronounced. Correspondingly, the power of the test improves substantially, achieving over 90\% power under moderate dependence even when $|\alpha_1|$ is close to 1, particularly when using the threshold $v = \sqrt{\log(n)/n}$, which consistently outperforms the simpler choice $v = n^{-1/2}$.

\begin{table}[htbp!]
\setlength{\tabcolsep}{9pt} 
\caption{Estimates and empirical power of the test for detecting directional tail dependence ($\chi(X,Y) \neq 0$) under skew-t copula with $\nu = 3$. The skewness parameters are set such that $\alpha_1 = -\alpha_2$, introducing increasing asymmetry. Average estimats $\widehat{\chi}_n^{Y \rightarrow X}(v)$, $\widehat{\chi}_n^{X \rightarrow Y}(v)$, and their difference $\widehat{\chi}_n(X,Y)(v)$ are presented. Power is computed as the percentage of rejections at the 5\% level over 100 replications. Results are shown for two correlation settings ($\rho = 0.25$ and $\rho = 0.5$) and two threshold choices ($v = \sqrt{\log(n)/n}$ and $v = n^{-1/2}$).}
\label{tab:empirical-power}
\begin{tabular}{llrrrrrrrr}
\toprule
 &  & \multicolumn{8}{c}{$\alpha_1 = -\alpha_2$}\\
 &  & 0.10 & 0.37 & 0.64 & 0.91 & 1.19 & 1.46 & 1.73 & 2.00 \\
\midrule
\multicolumn{10}{l}{\textbf{\textit{Panel A: $v = \sqrt{\log(n)/n} $}} }\\
 $\rho=0.25$ & $\widehat{\chi}_n^{Y\rightarrow X}(v)$ & 0.39 & 0.38 & 0.37 & 0.36 & 0.34 & 0.34 & 0.33 & 0.33 \\
 & $\widehat{\chi}_n^{X\rightarrow Y}(v)$ & 0.37 & 0.32 & 0.27 & 0.23 & 0.20 & 0.18 & 0.17 & 0.15 \\
 & $\widehat{\chi}_n(X,Y)(v)$ & 0.01 & 0.06 & 0.10 & 0.13 & 0.14 & 0.16 & 0.17 & 0.18 \\
 & Power (\%) & 0.00 & 10.00 & 46.00 & 83.00 & 95.00 & 100.00 & 100.00 & 99.00\vspace{0.25cm} \\
$\rho=0.5$ & $\widehat{\chi}_n^{Y\rightarrow X}(v)$ & 0.26 & 0.27 & 0.26 & 0.25 & 0.24 & 0.24 & 0.23 & 0.23 \\
 & $\widehat{\chi}_n^{X\rightarrow Y}(v)$ & 0.25 & 0.22 & 0.19 & 0.16 & 0.14 & 0.12 & 0.11 & 0.11 \\
 & $\widehat{\chi}_n(X,Y)(v)$ & 0.02 & 0.05 & 0.07 & 0.09 & 0.10 & 0.11 & 0.12 & 0.12 \\
 & Power (\%) & 0.00 & 13.00 & 51.00 & 86.00 & 94.00 & 100.00 & 99.00 & 98.00\vspace{0.25cm} \\
\multicolumn{9}{l}{\textbf{\textit{Panel B: $v = 1/ \sqrt{n} $}} } \\
$\rho=0.25$ & $\widehat{\chi}_n^{Y\rightarrow X}(v)$ & 0.39 & 0.39 & 0.37 & 0.36 & 0.36 & 0.35 & 0.35 & 0.33 \\
 & $\widehat{\chi}_n^{X\rightarrow Y}(v)$ & 0.35 & 0.29 & 0.23 & 0.18 & 0.15 & 0.12 & 0.11 & 0.10 \\
 & $\widehat{\chi}_n(X,Y)(v)$ & 0.04 & 0.10 & 0.13 & 0.18 & 0.21 & 0.22 & 0.24 & 0.23 \\
 & Power (\%) & 4.00 & 18.00 & 31.00 & 55.00 & 81.00 & 83.00 & 93.00 & 97.00\vspace{0.25cm} \\
$\rho=0.5$ & $\widehat{\chi}_n^{Y\rightarrow X}(v)$ & 0.24 & 0.25 & 0.25 & 0.25 & 0.24 & 0.23 & 0.23 & 0.23 \\
 & $\widehat{\chi}_n^{X\rightarrow Y}(v)$ & 0.23 & 0.18 & 0.15 & 0.12 & 0.10 & 0.08 & 0.08 & 0.07 \\
 & $\widehat{\chi}_n(X,Y)(v)$ & 0.01 & 0.07 & 0.10 & 0.13 & 0.14 & 0.14 & 0.16 & 0.16 \\
 & Power (\%) & 2.00 & 9.00 & 25.00 & 44.00 & 63.00 & 70.00 & 84.00 & 85.00 \\
\bottomrule
\end{tabular}%
\end{table}

\section{Ocean Data Application} \label{sec:data-application}

Oceanographic data have recently been shown to exhibit asymmetric dependence structures \citep{vanem2016joint, zhang2018modeling, lyu2023testing}. In this work, we analyze data collected from the Western Gulf of Alaska buoy (Station 46001)\footnote{\url{https://www.ndbc.noaa.gov/station_page.php?station=46001}}, maintained by the U.S. National Data Buoy Center (NDBC). This station provides extensive longitudinal datasets on meteorological and spectral wave conditions dating back to 1972, which we use to investigate the presence of directional dependence in extreme events. Our dataset comprises 509,379 hourly observations over 50 years, from January 1976 to January 2026. We specifically focus on measurements of significant wave height (WVHT), average wave period (APD), wind speed (WSPD), dominant wave period (DPD), and sea level pressure (PRES).
We observe strong autocorrelation in the oceanographic data. Following \cite{vanem2016joint}, a two-stage pre-processing methodology was employed to mitigate serial dependence and seasonal non-stationarity. First, short-term autocorrelation due to persistence in the wave generation process was addressed through a systematic subsampling procedure, which reduced the hourly time series to weekly intervals. 

This resulted in 1,937 weekly observations for each variable. Second, seasonality was eliminated by calculating the seasonal mean ($\mu_j$) and standard deviation ($\sigma_j$) for each week $j \in \{1, \dots, 52\}$ of the annual cycle. Specifically, original observations ($X_i$) were transformed into pre-processed values ($Y_i$) using the relation:
\begin{equation}
    Y_i = \frac{X_i - \mu_{w(i)}}{\sigma_{w(i)}} + M,
\end{equation}
where $w(i)$ is a function that maps the observation index $i$ to its corresponding week of the year, and $M$ represents the overall mean of the dataset. Table~\ref{tab:descriptive_stats} presents a detailed overview and descriptive statistics of the transformed variables.


\begin{table}[htbp!]
\centering
\caption{Descriptive statistics of wave and wind variables.}
\label{tab:descriptive_stats}
\resizebox{0.8\textwidth}{!}{%
\begin{tabular}{l p{7cm} r r r r}
\toprule
 Variable & \multicolumn{1}{c}{Description} & Min & Median & Q99 & Max \\
\midrule
WVHT & Significant wave height (m), average height of the highest one-third waves & 0.40 & 2.40 & 7.20 & 9.32 \\
APD & Average wave period (s), mean time between successive crests & 3.45 & 6.51 & 9.90 & 11.60 \\
DPD & Dominant wave period (s), period with greatest spectral energy & 3.57 & 10.00 & 16.70 & 20.00 \\
WSPD & Wind speed ($\text{m}.\text{s}^{-1}$), hourly-mean wind speed at anemometer height & 0.00 & 7.30 & 17.05 & 21.10 \\
PRES & Abnormal sea level pressure (mb), the abnormal atmospheric pressure adjusted to mean sea level & -31.55 & 5.05 & 43.36 & 58.05 \\
\bottomrule
\end{tabular}%
}
\end{table}

Figure~\ref{fig:ocean_data} provides a visual summary of the dependence structure. The copula plots, in particular, highlight the presence of strong asymmetries in the joint tails, indicating potential directional dependence in the extremes of the oceanographic variables.

\begin{figure}[htbp!]
        \centering
        \includegraphics[width=\textwidth,
                    trim={2cm 2cm 2cm 2cm},clip]{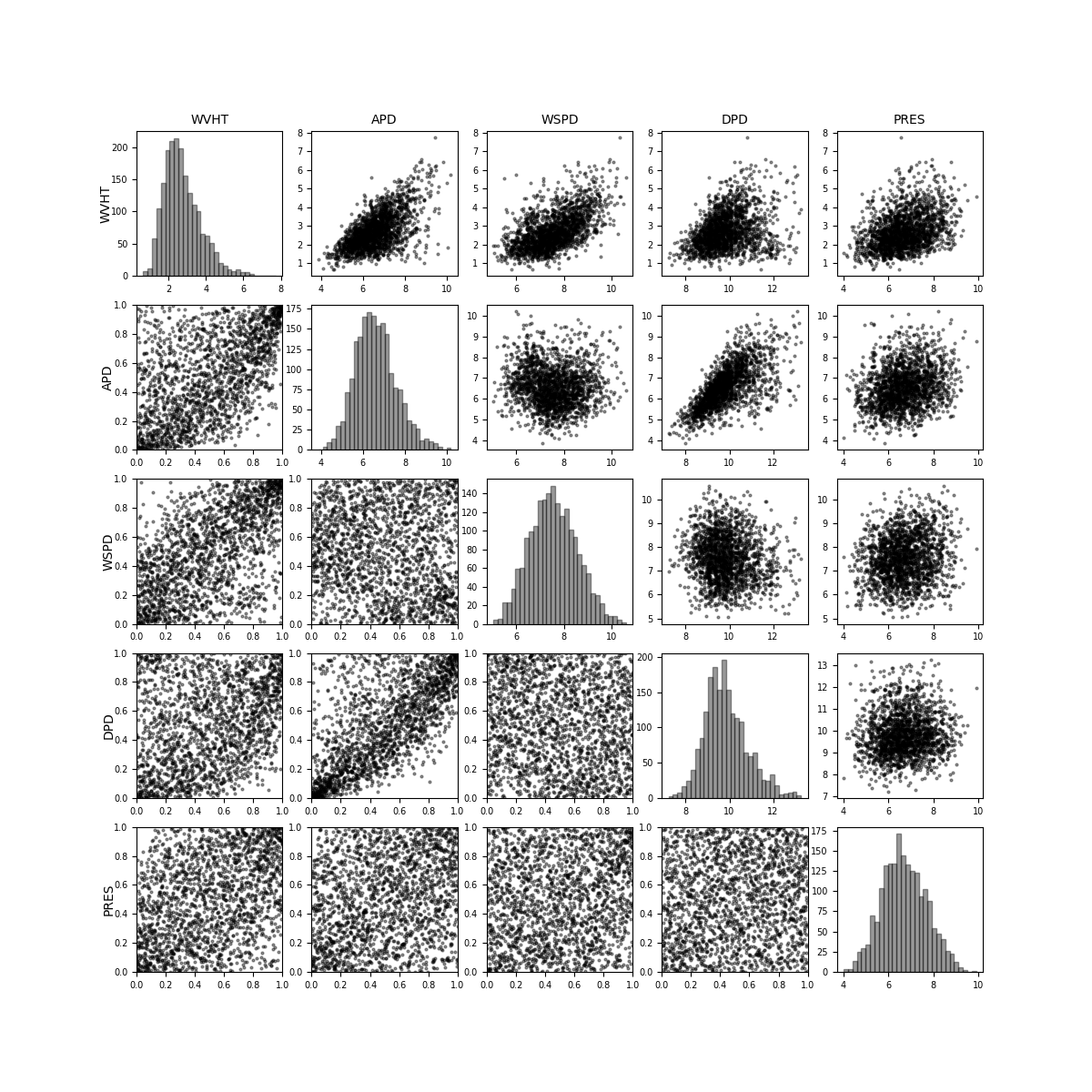} %
    \caption{Bivariate behaviour of wind and wave variables from the National Data Buoy Center (station 46001). The diagonal panels represent histograms of each variable’s marginal distribution, the upper-triangle plots show the raw bivariate data, and the lower-triangle plots show the same pairs after transforming each series to scaled ranks, showing their empirical copulas. The five key variables analysed are WVHT, APD, WSPD, DPD, and PRES.}\label{fig:ocean_data}
\end{figure}

In this real application, we are mostly interested in large values of the variables, more than on extremely small ones. As explained at the end of Section~\ref{sec:dir-tail-dep}, the above methological framework can easily be extended to upper-tail dependence. We thus apply a simple transformation of all our variables, by multiplying them by -1, so that the standard lower-tail framework can be applied to these transformed variables. Table~\ref{tab:direction_inference_metrics} reports the estimated directional tail dependence coefficients $\widehat{\chi}_n(X,Y)(v)$ for all pairs, along with $p$-values from the DTD test. Several pairs, in particular APD with each of the other variables, exhibit values of $\widehat{\chi}_n(X,Y)(v)$ that are statistically significantly different from $0$. This often indicates that extremes at one variable tend to condition those at another, rather than occurring symmetrically.

\begin{table}[htbp!]
\setlength{\tabcolsep}{8pt} 
\caption{Summary of directional inference metrics between pairs of variables: $\widehat{\chi}_n^{Y\rightarrow X}(v)$, $\widehat{\chi}_n^{X\rightarrow Y}(v)$, $\widehat{\chi}_n(X,Y)(v)$, as well as the estimates $\widehat{\sigma}^2_{C,n}(v)$, $\widehat{\sigma}^2_{C_P,n}(v)$, and $\widehat{\mathcal{V}}_{C,n}(v,v)$, of the variances $\sigma^2_C(v)$ and $\sigma^2_{C_P}(v)$ and covariance $\mathcal{V}_C(v,v)$, based on empirical copulas. The last column reports the p-value of the statistical test assessing the null hypothesis of no DTD ($\chi(X,Y) \neq 0$). Results are presented for two threshold levels ($v = \sqrt{\log(n)/n}$ and $v = n^{-1/2}$).} \label{tab:direction_inference_metrics}
\resizebox{\textwidth}{!}{%
\begin{tabular}{llrrrrrrr}
\toprule
X & Y & $\widehat{\chi}_n^{Y\rightarrow X}(v)$ & $\widehat{\chi}_n^{X\rightarrow Y}(v)$ & $\widehat{\chi}_n(X,Y)(v)$ & $\widehat{\sigma}^2_{C,n}(v)$ & $\widehat{\sigma}^2_{C_P,n}(v)$ & $\widehat{\mathcal{V}}_{C,n}(v,v)$ & p-value \\
\midrule
\multicolumn{9}{l}{\textbf{\textit{Panel A: $v = \sqrt{\log(n)/n} $}} }\\
WVHT & APD & 0.23 & 0.11 & 0.11 & 1.46 & 0.44 & -0.31 & 0.00 \\
WVHT & WSPD & 0.16 & 0.21 & -0.06 & 0.64 & 1.18 & -0.22 & 0.04 \\
WVHT & DPD & 0.51 & 0.25 & 0.26 & 2.36 & 0.90 & 0.16 & 0.00 \\
WVHT & PRES & 0.29 & 0.32 & -0.02 & 1.22 & 1.31 & 0.09 & 0.56 \\
APD & WSPD & 0.42 & 0.52 & -0.11 & 1.63 & 2.16 & 0.30 & 0.02 \\
APD & DPD & 0.27 & 0.13 & 0.13 & 1.41 & 0.38 & -0.18 & 0.00 \\
APD & PRES & 0.37 & 0.37 & -0.00 & 1.55 & 1.55 & 0.13 & 1.00 \\
WSPD & DPD & 0.60 & 0.52 & 0.08 & 1.66 & 1.71 & 0.10 & 0.07 \\
WSPD & PRES & 0.42 & 0.34 & 0.09 & 1.90 & 1.48 & 0.15 & 0.05 \\
DPD & PRES & 0.48 & 0.50 & -0.03 & 1.58 & 1.68 & 0.10 & 0.53 \\
\multicolumn{9}{l}{\textbf{\textit{Panel B: $v = 1/ \sqrt{n} $}} } \\
WVHT & APD & 0.21 & 0.06 & 0.16 & 4.41 & 0.43 & -0.19 & 0.00 \\
WVHT & WSPD & 0.11 & 0.19 & -0.07 & 0.85 & 3.19 & -0.04 & 0.10 \\
WVHT & DPD & 0.48 & 0.19 & 0.29 & 6.22 & 1.63 & 0.59 & 0.00 \\
WVHT & PRES & 0.28 & 0.29 & -0.01 & 3.18 & 2.37 & 0.20 & 0.82 \\
APD & WSPD & 0.40 & 0.52 & -0.12 & 4.08 & 5.52 & 0.74 & 0.12 \\
APD & DPD & 0.24 & 0.10 & 0.14 & 3.25 & 0.60 & 0.08 & 0.00 \\
APD & PRES & 0.33 & 0.40 & -0.07 & 3.99 & 4.04 & 0.43 & 0.30 \\
WSPD & DPD & 0.60 & 0.53 & 0.07 & 3.96 & 3.86 & -0.02 & 0.26 \\
WSPD & PRES & 0.43 & 0.31 & 0.12 & 4.80 & 3.25 & 0.22 & 0.06 \\
DPD & PRES & 0.44 & 0.51 & -0.06 & 3.56 & 4.33 & -0.01 & 0.31 \\
\bottomrule
\end{tabular}%
}
\end{table}

These results are visually summarized in Figure~\ref{fig:ocean-network}, which presents a  directed graph where arrows represent the DTD links identified as statistically  significant at the 5\% level. The figure displays links that are significant for both  thresholds ($v = \sqrt{\log(n)/n}$ and $v = n^{-1/2}$). Furthermore,  Figure~\ref{fig:ocean-network-v2} illustrates the results specifically for the  threshold $v = \sqrt{\log(n)/n}$ (Panel A of Table~\ref{tab:direction_inference_metrics}), which reveals two additional links. The orientation of the edges reflects the estimated direction of extremal influence.

\begin{figure}[htbp!]
        \centering
        \includegraphics[width=0.4\textwidth,
                    trim={1cm 1cm 1cm 2cm},clip]{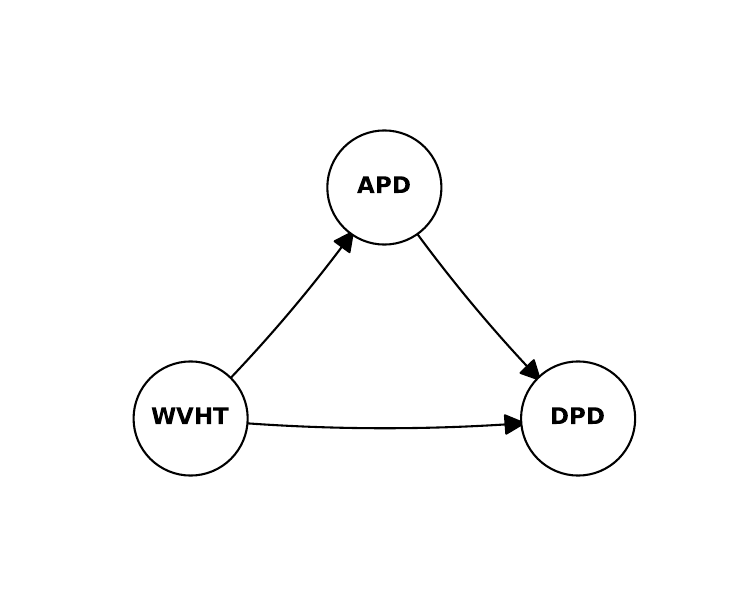} %
    \caption{Network of significant DTD links between the variables of the ocean dataset using both thresholds ($v = \sqrt{\log( n) / n}$ and $v = 1/\sqrt{n}$)}
    \label{fig:ocean-network}
\end{figure}

In particular, WVHT is tail-directed toward both APD and DPD, indicating that extremely large wave heights are more systematically associated with relatively large wave periods than the reverse configurations. This finding is consistent with energetic sea states, in which large wave heights are typically accompanied by longer wave periods, while extremely long period of waves may also occur in the absence of large heights.

In addition, a clear directional effect is detected from APD to DPD. This result suggests that extremely large values of the average wave period tend to be associated with relatively large dominant periods more frequently than extremely large dominant periods are associated with relatively large average periods. The causal interpretation of Figure~\ref{fig:ocean-network} is limited, since we don't know whether APD plays the role of a mediator between variables WVHT and DPD: is there a direct transmission channel from WVHT toward DPD or is this DTD from WVHT to DPD fully explained by the variable APD? To address this question, it would be necessary to condition the TCTEs on the variable DPD, requiring a specific asymptotic framework that is out of the scope of this paper.

At the 5\% significance level, using a threshold of $v = \sqrt{\log(n)/n}$ (as shown in Figure~\ref{fig:ocean-network-v2}), we identify two additional dependence links: WSPD $\to$ WVHT and WSPD $\to$ APD. These connections indicate that extreme wind speeds play a role in shaping both wave height and the average wave period. This aligns with physical expectations, as intense wind stress over the ocean surface directly transfers energy to the wave field, driving extreme sea states.

\begin{figure}[htbp!]
        \centering
        \includegraphics[width=0.5\textwidth,
                    trim={0cm 1cm 0cm 1cm},clip]{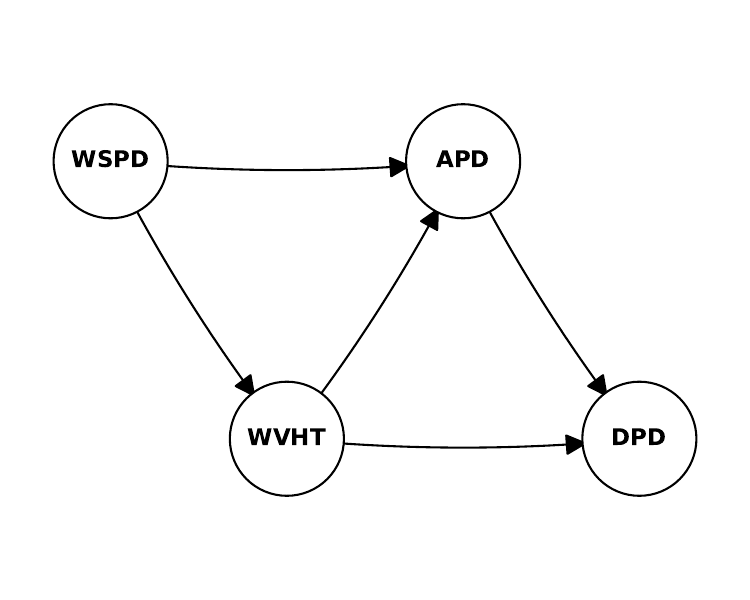} %
    \caption{Network of significant DTD links between the variables of the ocean dataset using the threshold $v = \sqrt{\log(n) / n}$. This includes two additional links compared to Fig.~\ref{fig:ocean-network} (WSPD $\to$ WVHT, WSPD $\to$ APD), indicating influence from wind speed.}
    \label{fig:ocean-network-v2}
\end{figure}

Overall, this study provides empirical evidence of directional asymmetries in oceanic extremes and illustrates the ability of the proposed measure to identify  patterns of extremal influence in multivariate environmental data.

\section{Conclusion}

This paper introduces a new measure of directional tail dependence designed to capture asymmetric extremal behavior between variables while remaining invariant to marginal distributions. Based on transformed conditional tail expectations, the proposed coefficient admits a clear interpretation in terms of directional extremal influence and can be expressed directly through the copula of the joint distribution.

We establish the consistency and asymptotic normality of a natural empirical estimator and develop a statistical testing framework for directional extremal effects. Simulation studies based on asymmetric copula constructions demonstrate good finite-sample performance and illustrate how the measure responds to varying degrees of asymmetry and tail dependence.

An application to oceanographic data highlights pronounced directional asymmetries in extreme events, revealing a coherent structure linking atmospheric forcing, wave height, and wave period characteristics.
Overall, the proposed approach provides a flexible and interpretable tool for identifying and quantifying directional features in multivariate extremes, and offers a useful complement to existing symmetric measures of tail dependence.

Beyond descriptive analysis, the directional nature of the proposed measure suggests potential for efficient use in causal discovery or causal screening procedures for extreme events.

\setlength\bibsep{0pt}
\bibliographystyle{apalike}

\addtocontents{toc}{\protect\setcounter{tocdepth}{-1}} 
\appendix  

\section{Proof of Proposition \ref{proposition:dtdc-copula}} \label{sec:proof_copula_form}

We show that $\chi^{Y\rightarrow X}(v)$ can be written in terms of copula function solely, and thus is a pure copula property independent of the marginal behaviour. We first write
$$\begin{array}{ccl}
\chi^{Y\rightarrow X}(v) & = & \E[F_X(X)|F_Y(Y)\leq v] \\
 & = & \frac{1}{\proba(F_Y(Y)\leq v)}\E[F_X(X)\indic_{\{F_Y(Y)\leq v\}}].
\end{array}$$
By continuity of the marginals, the probability integral transforms $F_X(X)$ and $F_Y(Y)$ are uniform, so that $\proba(F_Y(Y)\leq v)=v$. We have as well, $\forall x\in[0,1]$, $x=\int_0^1\indic_{\{u<x\}}du$, so that
$$\begin{array}{ccl}
\chi^{Y\rightarrow X}(v) & = & \frac{1}{v}\E\left[\int_0^1\indic_{\{u<F_X(X),F_Y(Y)\leq v\}}du\right] \\
 & = & \frac{1}{v}\int_0^1\proba(F_X(X)>u,F_Y(Y)\leq v)du.
\end{array}$$
where we used Fubini's theorem. By the law of total probability, $\proba(F_X(X)>u,F_Y(Y)\leq v)$ is equal to $\proba(F_Y(Y)\leq v)-\proba(F_X(X)\leq u,F_Y(Y)\leq v)$. Using again the uniformity of $F_X(X)$ and $F_Y(Y)$ and introducing their copula $C$ by Sklar's theorem, we conclude that
$$\begin{array}{ccl}
\chi^{Y\rightarrow X}(v) & = & \frac{1}{v}\int_0^1\left[\proba(F_Y(Y)\leq v)-\proba(F_X(X)\leq u,F_Y(Y)\leq v)\right]du \\
 & = & 1- \frac{1}{v}\int_0^1C(u,v)du.
\end{array}$$
We also obtain equation~\eqref{eq:tcte_limit_copula} as the limit of the above equation when $v\rightarrow 0$.

\section{Proof of Theorem~\ref{thm:Consistent}}\label{sec:proof_Consistent}

\begin{proof}
The estimator $(u,v)\in[0,1]^2\mapsto\widehat C_n(u,v)$ is a bounded function and it almost surely converges toward the copula $C$~\citep{Deheuvels}. As a consequence, we can apply the dominated convergence theorem on the interval $[0,1]$ and establish the conclusion regarding $\widehat \chi^{Y\rightarrow X}_n(v)$.

For the almost sure convergence of $\widehat \chi^{Y\rightarrow X}_n(v_n)$ toward $\chi^{Y\rightarrow X}$, we decompose it in two steps. First, from the uniform almost sure convergence of the empirical copula~\citep[Theorem 3.1]{Deheuvels}, we can write that for a given $\varepsilon$ there exists a threshold for $n$ above which we have almost surely
$$\begin{array}{ccl}
\left|\widehat \chi^{Y\rightarrow X}_n(v_n) - \chi^{Y\rightarrow X}_n(v_n)\right| & \leq & \frac{1}{v_n}\int_0^1\left|\widehat C_n(u,v_n) - C(u,v_n)\right|du \\
 & \leq & \frac{1}{v_n} \underset{(u,v)\in[0,1]^2}{\sup} \left|\widehat C_n(u,v) - C(u,v)\right| \\
 & \leq & \frac{\varepsilon}{2},
\end{array}$$  
because $v_n\sqrt{n/\log\log n}\rightarrow\infty$. Second, the uniform differentiability provides that, for all $n$ above another threshold,
$$\begin{array}{ccl}
\left|\chi^{Y\rightarrow X}(v_n) - \chi^{Y\rightarrow X}\right| & \leq & \int_0^1\left|\frac{C(u,v_n)}{v_n}-\partial_2 C(u,0)\right|du \\
 & \leq & \underset{u\in[0,1]}{\sup} \left|\frac{C(u,v_n)}{v_n}-\partial_2 C(u,0)\right| \\
 & = & \frac{\varepsilon}{2},
\end{array}$$
because $v_n\rightarrow 0$. Finally, these two parts give for all $n$ above a given threshold and almost surely
$$\left|\widehat \chi^{Y\rightarrow X}_n(v_n) - \chi^{Y\rightarrow X}\right| \leq \left|\widehat \chi^{Y\rightarrow X}_n(v_n) - \chi^{Y\rightarrow X}_n(v_n)\right| + \left|\chi^{Y\rightarrow X}(v_n) - \chi^{Y\rightarrow X}\right| \leq \varepsilon,$$
which is the expected result.
\end{proof}

\section{Proof of Theorem~\ref{thm:GaussChiV}}\label{sec:proof_GaussChiV}

\begin{proof}
Following Proposition 3.1 of \cite{segers2012asymptotics}, which extends a result of \cite{FermanianRadulovicWegkamp} which is slightly more restrictive regarding the regularity condition, we know that the empirical copula process $\sqrt{n}(\widehat C_n(u,v) -C(u,v))$ converges weakly towards a Gaussian process $G_C(u,v)$,
$$G_C(u,v)=B_C(u,v)-\partial_1C(u,v)B_C(u,1)-\partial_2C(u,v)B_C(1,v),$$
where $B_C$ is a Brownian bridge on $[0,1]^2$ of covariance
\begin{equation}\label{eq:CovBrownBridge}
\E\left[B_C(u,v)B_C(u',v')\right]=C(u\wedge u',v\wedge v')-C(u,v)C(u',v').
\end{equation}
Therefore, $\sqrt{n}\left(\widehat\chi^{Y\rightarrow X}_n(v)-\chi^{Y\rightarrow X}(v)\right) = \sqrt{n}\int_0^1(C(u,v) -\widehat C_n(u,v))du/v$ weakly converges toward $-\int_0^1G_C(u,v)du/v$, which is a centred Gaussian variable of variance:
$$\begin{array}{ccl}
\sigma^2_C(v) & = & \frac{1}{v^2}\int_0^1\int_0^1 \E[G_C(u,v)G_C(w,v)]dudw \\
 & = & \frac{2}{v^2}\int_0^1\int_0^w \E\left[B_C(u,v)B_C(w,v) + \partial_2C(u,v)\partial_2C(w,v)B_C(1,v)^2 \right. \\
 & & + \partial_1C(u,v)\partial_1C(w,v)B_C(u,1)B_C(w,1) \\
 & & -\partial_2C(u,v)B_C(w,v)B_C(1,v) -\partial_2C(w,v)B_C(u,v)B_C(1,v) \\
 & & -\partial_1C(u,v)B_C(w,v)B_C(u,1) -\partial_1C(w,v)B_C(u,v)B_C(w,1) \\
 & & \left.+\partial_2C(u,v)\partial_1C(w,v)B_C(1,v)B_C(w,1) + \partial_2C(w,v)\partial_1C(u,v)B_C(1,v)B_C(u,1)\right]dudw,
\end{array}$$
which leads to the expected result using the linearity of the expectation and noting that, for $u\leq w$, we have:
$$\left\{\begin{array}{ccl}
\E\left[B_C(u,v)B_C(w,v)\right] & = & C(u,v)(1-C(w,v)) \\
\E\left[B_C(1,v)^2\right] & = & v(1-v) \\
\E\left[B_C(u,1)B_C(w,1)\right] & = & u(1-w) \\
\E\left[B_C(w,v)B_C(1,v)\right] & = & C(w,v)(1-v) \\
\E\left[B_C(u,v)B_C(1,v)\right] & = & C(u,v)(1-v) \\
\E\left[B_C(w,v)B_C(u,1)\right] & = & C(u,v)-C(w,v)u \\
\E\left[B_C(u,v)B_C(w,1)\right] & = & C(u,v)(1-w) \\
\E\left[B_C(1,v)B_C(w,1)\right] & = & C(w,v) -vw \\
\E\left[B_C(1,v)B_C(u,1)\right] & = & C(u,v) -vu.
\end{array}\right.$$
\end{proof}

\section{Proof of Theorem~\ref{thm:GaussChiV_Diff}}\label{sec:proof_GaussChiV_Diff}

\begin{proof}
We use the same tools as in the proof of Theorem~\ref{thm:GaussChiV}, namely Proposition 3.1 of \cite{segers2012asymptotics}. From this, we get that 
$$\begin{array}{cl}
 & \sqrt{n}\left(\left[\widehat\chi^{Y\rightarrow X}_n(v)-\widehat\chi^{X\rightarrow Y}_n(u)\right]-\left[\chi^{Y\rightarrow X}(v)-\chi^{X\rightarrow Y}(u)\right]\right) \\
 = & \sqrt{n}\int_0^1\left(\frac{C(z,v)-\widehat C_n(z,v)}{v} - \frac{C(u,z)-\widehat C_n(u,z)}{u}\right)dz \\
 \overset{d}{\longrightarrow} & \int_0^1 \left(\frac{G_C(u,z)}{u}-\frac{G_C(z,v)}{v}\right)dz,
\end{array}$$
the limit being a centred Gaussian variable of variance:
$$\begin{array}{ccl}
\sigma^2_C(u,v) & = & \int_0^1\int_0^1 \E\left[\left(\frac{G_C(u,w)}{u}-\frac{G_C(w,v)}{v}\right)\left(\frac{G_C(u,z)}{u}-\frac{G_C(z,v)}{v}\right)\right]dwdz \\
 & = & \int_0^1\int_0^1 \left(\E\left[\frac{G_{C_P}(w,u)G_{C_P}(z,u)}{u^2}\right] - 2\E\left[\frac{G_C(u,w)G_C(z,v)}{uv}\right] + \E\left[\frac{G_C(w,v)G_C(z,v)}{v^2}\right]\right)dwdz \\
 & = & \sigma^2_{C_P}(u) + \sigma^2_C(v) - \frac{2}{uv}\int_0^1\int_0^1 \E\left[G_C(u,w)G_C(z,v)\right].
\end{array}$$
Moreover, we can decompose the integral $\int_0^1\int_0^1\E\left[G_C(u,w)G_C(z,v)\right]dwdz$ in four parts: $\int_0^u\int_0^v...dwdz$, $\int_0^u\int_v^1...dwdz$, $\int_u^1\int_0^v...dwdz$, and $\int_u^1\int_v^1...dwdz$, noted respectively $\mathcal I_1$, $\mathcal I_2$, $\mathcal I_3$, and $\mathcal I_4$. We have
$$\begin{array}{ccl}
\mathcal I_1 & = & \int_0^u\int_0^v \E\left[B_C(u,w)B_C(z,v) + \partial_2C(u,w)\partial_2C(z,v)B_C(1,w) B_C(1,v) \right. \\
 & & + \partial_1C(u,w)\partial_1C(z,v)B_C(u,1)B_C(z,1) \\
 & & -\partial_2C(u,w)B_C(z,v)B_C(1,w) -\partial_2C(z,v)B_C(u,w)B_C(1,v) \\
 & & -\partial_1C(u,w)B_C(z,v)B_C(u,1) -\partial_1C(z,v)B_C(u,w)B_C(z,1) \\
 & & \left.+\partial_2C(u,w)\partial_1C(z,v)B_C(1,w)B_C(z,1) + \partial_2C(z,v)\partial_1C(u,w)B_C(1,v)B_C(u,1)\right]dwdz \\
 & = & \int_0^u\int_0^v \left\{C(z,w)-C(u,w)C(z,v) + \partial_2C(u,w)\partial_2C(z,v)w(1-v) \right. \\
 & & + \partial_1C(u,w)\partial_1C(z,v)z(1-u) \\
 & & -\partial_2C(u,w)(C(z,w)-C(z,v)w) -\partial_2C(z,v)C(u,w)(1-v) \\
 & & -\partial_1C(u,w)C(z,v)(1-u) -\partial_1C(z,v)(C(z,w)-C(u,w)z) \\
 & & \left.+\partial_2C(u,w)\partial_1C(z,v)(C(z,w)-zw) + \partial_2C(z,v)\partial_1C(u,w)(C(u,v)-uv)\right\}dwdz,
\end{array}$$
where we used equation~\eqref{eq:CovBrownBridge} and the fact that $w\leq v$ and $z\leq u$. Similarly, considering $v\leq w$ and $z\leq u$, we obtain
$$\begin{array}{ccl}
\mathcal I_2 & = & \int_0^u\int_v^1 \left\{C(z,v)-C(u,w)C(z,v) + \partial_2C(u,w)\partial_2C(z,v)v(1-w) \right. \\
 & & + \partial_1C(u,w)\partial_1C(z,v)z(1-u) \\
 & & -\partial_2C(u,w)C(z,v)(1-w) -\partial_2C(z,v)(C(u,v)-C(u,w)v) \\
 & & -\partial_1C(u,w)C(z,v)(1-u) -\partial_1C(z,v)(C(z,w)-C(u,w)z) \\
 & & \left.+\partial_2C(u,w)\partial_1C(z,v)(C(z,w)-zw) + \partial_2C(z,v)\partial_1C(u,w)(C(u,v)-uv)\right\}dwdz.
\end{array}$$
Then, considering $w\leq v$ and $u\leq z$, we obtain
$$\begin{array}{ccl}
\mathcal I_3 & = & \int_u^1\int_0^v \left\{C(u,w)(1-C(z,v)) + \partial_2C(u,w)\partial_2C(z,v)w(1-v) \right. \\
 & & + \partial_1C(u,w)\partial_1C(z,v)u(1-z) \\
 & & -\partial_2C(u,w)(C(z,w)-C(z,v)w) -\partial_2C(z,v)C(u,w)(1-v) \\
 & & -\partial_1C(u,w)(C(u,v)-C(z,v)u) -\partial_1C(z,v)C(u,w)(1-z) \\
 & & \left.+\partial_2C(u,w)\partial_1C(z,v)(C(z,w)-zw) + \partial_2C(z,v)\partial_1C(u,w)(C(u,v)-uv)\right\}dwdz,
\end{array}$$
Finally, considering $v\leq w$ and $u\leq z$, we get
$$\begin{array}{ccl}
\mathcal I_4 & = & \int_u^1\int_v^1 \left\{C(u,v)-C(u,w)C(z,v) + \partial_2C(u,w)\partial_2C(z,v)v(1-v) \right. \\
 & & + \partial_1C(u,w)\partial_1C(z,v)u(1-z) \\
 & & -\partial_2C(u,w)C(z,v)(1-w) -\partial_2C(z,v)(C(u,v)-C(u,w)v) \\
 & & -\partial_1C(u,w)(C(u,v)-C(z,v)u) -\partial_1C(z,v)C(u,w)(1-z) \\
 & & \left.+\partial_2C(u,w)\partial_1C(z,v)(C(z,w)-zw) + \partial_2C(z,v)\partial_1C(u,w)(C(u,v)-uv)\right\}dwdz.
\end{array}$$
Putting all these formulas together leads to the expected result.
\end{proof}

\end{document}